\newlength{\actualtopmargin}
\newlength{\actualsidemargin}
\theoremstyle{plain}
  \newtheorem{theorem}{Theorem}
  \newtheorem{lemma}[theorem]{Lemma}
\theoremstyle{definition}
\theoremstyle{remark}
  \newtheorem*{remark}{Remark}
\theoremstyle{plain}
  \newtheorem*{theorem*}{Theorem}
  \newtheorem*{lemma*}{Lemma}
  \newtheorem*{corollary*}{Corollary}
  \newtheorem*{proposition*}{Proposition}
  \newtheorem*{claim*}{Claim}
\newenvironment{step}
  {
    \begin{enumerate}

  }
  {\end{enumerate}}
\newenvironment{algorithm*}[1]
  {
    \begin{center}
      \hrulefill\\
      \textbf{#1}
  }
  {
    \vspace{-\baselineskip}
    \hrulefill
    \end{center}
  }
\newenvironment{protocol*}[1]
  {
    \begin{center}
      \hrulefill\\
      \textbf{#1}
  }
  {
    \vspace{-\baselineskip}
    \hrulefill
    \end{center}
  }
\newlength{\itemwidth}
\newlength{\descriptionwidth}
\newenvironment{promiseproblem*}[4]
  {
    \begin{center}
      \hrulefill\\
      \textbf{\textsc{#1}}
      \settowidth{\itemwidth}{\textbf{Yes Instances:}}
      \setlength{\descriptionwidth}{\textwidth}
      \addtolength{\descriptionwidth}{-\itemwidth}
      \addtolength{\descriptionwidth}{-\labelsep}
      \begin{description}
        \item[\parbox{\itemwidth}{Input:}]
          \parbox[t]{\descriptionwidth}{#2}
        \item[\parbox{\itemwidth}{Yes Instances:}]
          \parbox[t]{\descriptionwidth}{#3}
        \item[\parbox{\itemwidth}{No Instances:}]
          \parbox[t]{\descriptionwidth}{#4}
      \end{description}
  }
  {
    \vspace{-1\baselineskip}
    \hrulefill
    \end{center}
  }
\newcommand{\calC}{\mathcal{C}}
\newcommand{\classfont}{\mathsf}
\newcommand{\AM}{\classfont{AM}}
\newcommand{\BP}{\classfont{BP}}
\newcommand{\NP}{\classfont{NP}}
\newcommand{\NQP}{\classfont{NQP}}
\newcommand{\BPP}{\classfont{BPP}}
\newcommand{\BQP}{\classfont{BQP}}
\newcommand{\PH}{\classfont{PH}}
\newcommand{\PP}{\classfont{PP}}
\newcommand{\MA}{\classfont{MA}}
\newcommand{\AC}{\classfont{AC}}
\newcommand{\SBP}{\classfont{SBP}}
\newcommand{\SBQP}{\classfont{SBQP}}
\newcommand{\post}{\classfont{Post}}
\newcommand{\postBPP}{\classfont{PostBPP}}
\newcommand{\postBQP}{\classfont{PostBQP}}
\newcommand{\co}{\classfont{co}\textsf{-}}
\newcommand{\CequalP}{\classfont{C}_{=}\classfont{P}}
\newcommand{\coCequalP}{{\co\CequalP}}
\newcommand{\SBQfP}[1]{{\classfont{SBQ}_{[#1]}\classfont{P}}}
\newcommand{\NQfP}[1]{{\classfont{NQ}_{[#1]}\classfont{P}}}
\newcommand{\SBQoneP}{\SBQfP{1}}
\newcommand{\NQoneP}{\NQfP{1}}
\newcommand{\tensor}{\otimes}
\newcommand{\bra}[1]{\langle #1 \rvert}
\newcommand{\ket}[1]{\lvert #1 \rangle}
\newcommand{\ketbra}[1]{\lvert #1 \rangle \langle #1 \rvert}
\newcommand{\conjugate}[1]{{#1^\dagger}}
\newcommand{\abs}[1]{\lvert #1 \rvert}
\newcommand{\bigabs}[1]{\bigl\lvert #1 \bigr\rvert}
\newcommand{\norm}[1]{\lVert #1 \rVert}
\newcommand{\function}[3]{{#1 \colon #2 \to #3}}
\newcommand{\set}[1]{{\{ #1 \}}}
\newcommand{\Set}[2]{{\{ #1 \colon #2 \}}}
\newcommand{\init}{\mathrm{init}}
\newcommand{\final}{\mathrm{final}}
\newcommand{\acc}{\mathrm{acc}}
\newcommand{\DQCone}{\textrm{DQC1}}
\newcommand{\ignore}[1]{}
\begin{document}

\sloppy


\title{
  \Large{
    \textbf{
      Impossibility of Classically Simulating One-Clean-Qubit Computation
    }
  }
}

\author{
  Keisuke Fujii\footnotemark[1]\\
  \and
  Hirotada Kobayashi\footnotemark[2]\\
  \and
  Tomoyuki Morimae\footnotemark[3]\\
  \and
  Harumichi Nishimura\footnotemark[4]\\
  \and
  Shuhei Tamate\footnotemark[2]~\footnotemark[5]\\
  \and 
  Seiichiro Tani\footnotemark[6]
}

\date{}

\maketitle
\pagestyle{plain}
\setcounter{page}{1}

\renewcommand{\thefootnote}{\fnsymbol{footnote}}

\vspace{-5mm}

\begin{center}
\large{
  \footnotemark[1]The Hakubi Center for Advanced Research and Graduate School of Informatics\\
  Kyoto University, Kyoto, Japan\\
  [2.5mm]
  \footnotemark[2]%
  Principles of Informatics Research Division,  
  National Institute of Informatics, 
  Tokyo, Japan\\
  [2.5mm]
  \footnotemark[3]%
  ASRLD Unit, 
  Gunma University, 
  Kiryu, Gunma, Japan\\
  [2.5mm]
  \footnotemark[4]%
  Department of Computer Science and Mathematical Informatics\\
  Graduate School of Information Science, 
  Nagoya University, 
  Nagoya, Aichi, Japan\\
  [2.5mm]
  \footnotemark[5]%
  RIKEN Center for Emergent Matter Science, 
  Wako, Saitama, Japan\\
  [2.5mm]
  \footnotemark[6]%
  NTT Communication Science Laboratories, 
  NTT Corporation, 
  Atsugi, Kanagawa, Japan
}\\
[5mm]
\large{25 February 2015}\\
[8mm]
\end{center}

\renewcommand{\thefootnote}{\arabic{footnote}}


\begin{abstract}
Deterministic quantum computation with one quantum bit~(DQC1) 
is a restricted model of quantum computing
where the input state is the completely mixed state except for a single clean qubit,
and only a single output qubit is measured at the end of the computing.
It is proved that
the restriction of quantum computation to the DQC1 model
does not change the complexity classes~$\NQP$ and $\SBQP$. 
As a main consequence,
it follows that the DQC1 model cannot be efficiently simulated by classical computers 
unless the polynomial-time hierarchy collapses to the second level
(more precisely, to $\AM$),
which answers the long-standing open problem posed by Knill and Laflamme
under the very plausible complexity assumption.
The argument developed in this paper 
also weakens the complexity assumption
necessary for the existing impossibility results on classical simulation
of various sub-universal quantum computing models,
such as the IQP model and the Boson sampling.
\end{abstract}


\section{Introduction}
\label{Section: introduction}

\paragraph{Background.}

The \emph{deterministic quantum computation with one quantum bit~(DQC1)},
often mentioned as the \emph{one-clean-qubit model},
is a restricted model of quantum computing proposed by 
Knill and Laflamme~\cite{KniLaf98PRL} originally motivated 
by nuclear magnetic resonance (NMR) quantum information processing. 
A DQC1 computation over $n$~qubits starts with the initial state of
the completely mixed state except for a single clean qubit,
namely,
${\ketbra{0} \tensor \bigl(\frac{I}{2}\bigr)^{\tensor (n-1)}}$. 
After applying a polynomial-size quantum circuit
to this state,
only a single output qubit is measured in the computational basis
at the end of the computing in order to read out the computation result. 

The DQC1 model does not seem to be universal for quantum computation.
Indeed, it was shown to be non-universal 
under some reasonable assumptions~\cite{AmbSchVaz06JACM}.
Moreover, since any quantum computation on $\bigl(\frac{I}{2}\bigr)^{\otimes n}$
is trivial to simulate classically,
the DQC1 model looks easy to classically simulate at first glance. 
Surprisingly, however,
the DQC1 model turned out to be able to efficiently solve several problems
for which no efficient classical algorithms are known, 
such as calculations of an integrability tester~\cite{PouLafMilPaz03PRA}, 
the spectral density~\cite{KniLaf98PRL}, the fidelity decay~\cite{PouBluLafOll04PRL},
Jones and HOMFLY polynomials~\cite{ShoJor08QIC,JorWoc09QIC}, 
and an invariant of 3-manifolds~\cite{JorAla11TQC}. 
While the amount of entanglement is very limited, 
the DQC1 model does exhibit some non-classical correlations~\cite{DatFlaCav05PRA,DatShaCav08PRL,DatVid07PRA}.  
In short, the DQC1 model is believed to be
a very restricted, but still genuinely quantum, computing model
--- a \emph{sub-universal} quantum computing model whose power is 
something between classical computation and universal quantum computation.  
It is a long-standing open problem whether efficient classical simulation is possible for the DQC1 model,
posed already in the first paper of the DQC1 model by Knill and Laflamme~\cite{KniLaf98PRL}.

With the development of quantum algorithms, 
proving hardness of classical simulations of quantum computations,
even in restricted quantum computing models,
becomes a very fundamental methodology for clarifying the power of quantum computing. 
Recently, a number of studies
focused on the hardness of \emph{weak} simulation
of sub-universal quantum computing models under some reasonable assumptions~\cite{TerDiV04QIC,BreJozShe11RSPA,AarArk13ToC,NiVan13QIC,Bro14arXiv,JozVan14QIC,MorFujFit14PRL,TakYamTan14QIC,TakTanYamTan14arXiv}.
Namely, a plausible assumption in complexity theory 
leads to the impossibility of efficient sampling by a classical computer
according to an output probability distribution
generatable with a quantum computing model. 
Among them are the IQP model~\cite{BreJozShe11RSPA} and the Boson sampling~\cite{AarArk13ToC},
both of which are proved hard for classical computers to weakly simulate,
unless the polynomial-time hierarchy collapses to the third level.

An interesting question to ask is whether a similar result holds even for the DQC1 model. 
Very recently, Morimae, Fujii and Fitzsimons~\cite{MorFujFit14PRL} approached to answering the question. They focused on the DQC1$_m$ model,
the generalization of the DQC1 model
that allows $m$ output qubits to be measured at the end of the computation,
and showed that the DQC1$_m$ model with ${m \geq 3}$ cannot be weakly simulated
unless the polynomial-time hierarchy collapses to the third level.
Their proof is based on a postselection argument 
that fictitiously projects a state onto a specific branch of the superposition with unit probability.
The complexity class~$\postBQP$
corresponding to bounded-error quantum polynomial-time computations 
with postselection is known equivalent to $\PP$~\cite{Aar05RSPA}.  
A key technique in Ref.~\cite{MorFujFit14PRL} was
a DQC1$_m$-type computation with postselection that simulates any given quantum circuit: 
By using a generalized Toffoli gate and the postselection over a single clean qubit,
one can retrieve from the completely mixed state
the ideal state in which all the qubits are in state~$\ket{0}$. 
Hence, the output of any quantum circuit with postselection  
can be simulated by the DQC1$_3$-type computation by measuring three qubits in total:
one qubit is measured for the postselection to pick up the ideal all-zero state, 
another qubit is measured for the postselection of the quantum circuit to be simulated,
and yet another qubit is measured to read out the output of the simulated circuit.
This means that the DQC$1_3$ model with postselection also has the computational power equivalent to ${\postBQP = \PP}$.
By an argument similar to that in Ref.~\cite{BreJozShe11RSPA}, 
it follows that $\PP$ is in $\postBPP$ (the version of $\BPP$ with postselection),
if the DQC1$_3$ model is weakly simulatable.
Together with Toda's theorem~\cite{Tod91SIComp},
this implies the collapse of the polynomial-time hierarchy to the third level.

\paragraph{Main results.}
One obvious drawback of the existing argument above
is an inevitable postselection measurement inherent to the definition of $\postBQP$.
This becomes a quite essential obstacle when trying to extend this argument
to the DQC1 model,
where only one qubit is allowed to be measured.
To deal with the DQC1 model, this paper takes a different approach
by considering the complexity class~$\NQP$ introduced in Ref.~\cite{AdlDeMHua97SIComp}
(or $\SBQP$ introduced in Ref.~\cite{Kup09arXiv}).

Let $\NQoneP$ and $\SBQoneP$ be the versions of $\NQP$ and $\SBQP$, respectively,
in which quantum computation performed must be of DQC1 type.
First, it is proved that the classes~$\NQP$ and $\SBQP$ remain unchanged
with this restriction.

\begin{theorem}
${\NQP = \NQoneP}$ and ${\SBQP = \SBQoneP}$.
\label{Theorem: NQP = NQ[1]P and SBQP = SBQ[1]P}
\end{theorem}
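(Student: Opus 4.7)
The inclusions $\NQoneP \subseteq \NQP$ and $\SBQoneP \subseteq \SBQP$ are immediate, since a DQC1 computation is a special case of ordinary polynomial-time quantum computation. For the reverse inclusions, the plan is to take any polynomial-size quantum circuit $V_0$ on $N$ qubits starting in $\ket{0^N}$, with acceptance defined as a measurement of qubit $1$ giving probability $p_V$, and build a DQC1 circuit whose acceptance probability equals $p_V(2 - p_V)/2^{N+1}$. This quantity is zero exactly when $p_V = 0$ and is squeezed between $p_V/2^{N+1}$ and $2p_V/2^{N+1}$ in general, so it preserves the $\NQP$ condition directly and preserves the $\SBQP$ multiplicative gap after a preliminary constant-round parallel-repetition boost of $V_0$ that makes the ratio $a(|x|)/b(|x|)$ comfortably exceed $2$.

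The key construction combines a ``coherent uncomputation'' of $V_0$ with the Knill--Laflamme generalized-Toffoli flagging. First, I would append one ancilla qubit and form the $(N+1)$-qubit unitary
\[
  V \;=\; V_0^\dagger \cdot \mathrm{CNOT}_{1 \to \mathrm{anc}} \cdot V_0,
\]
for which a direct calculation gives $\bra{0^{N+1}} V \ket{0^{N+1}} = 1 - p_V$, so $V$ fixes $\ket{0^{N+1}}$ up to a global phase precisely when $p_V = 0$. The DQC1 circuit on $n = N + 2$ qubits (one clean, $N+1$ dirty) then consists of four steps: (i) apply the generalized Toffoli $C_n$ that flips the clean qubit iff all dirty qubits are $\ket{0}$; (ii) apply $V$ to the dirty register, controlled on the clean qubit being $\ket{1}$; (iii) apply $C_n$ a second time; (iv) measure the clean qubit, accepting on $\ket{1}$. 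Tracing each classical branch $\ket{0}\otimes\ket{t}$ of the initial mixed state, the branch $t = 0^{N+1}$ evolves to $\alpha_0 \ket{0, 0^{N+1}} + \sum_{y \neq 0^{N+1}} \alpha_y \ket{1, y}$ where $\alpha_y = \bra{y} V \ket{0^{N+1}}$, contributing $1 - |\alpha_0|^2 = p_V(2 - p_V)$ to the raw acceptance sum; every branch with $t \neq 0^{N+1}$ is restored by the second $C_n$ to the product form $\ket{0}\otimes\ket{t}$ and therefore contributes nothing. Normalizing by $2^{N+1}$ yields the claimed acceptance probability.

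\textbf{Main obstacle.} The delicate point, especially for $\NQP$, is that DQC1 acceptance is a sum of nonnegative per-branch contributions, so a single nonzero contribution from a ``bad'' branch (dirty register initially not $0^{N+1}$) would make acceptance positive for every input and ruin the $\NQP$ reduction. The double-$C_n$ sandwich around the controlled-$V$ is engineered precisely to prevent this: the first $C_n$ isolates the all-zero dirty branch into the flag-$\ket{1}$ sector, the controlled-$V$ does meaningful work only there, and the second $C_n$ deterministically returns every other branch's clean qubit to $\ket{0}$ so that the one-qubit readout rejects it with certainty. The bulk of the verification is to confirm this branch-by-branch cancellation rigorously, to check that the mild distortion $p_V \mapsto p_V(2 - p_V)$ preserves the $\SBQP$ gap even when $p_V$ is as small as $2^{-\text{poly}(n)}$, and to handle the routine detail that a generalized Toffoli and a controlled version of $V$ can be implemented by polynomial-size quantum circuits without extra clean ancillas.
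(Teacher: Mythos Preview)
Your proof is correct and follows essentially the same construction as the paper's: both sandwich a compute--mark--uncompute block between two generalized Toffolis so that only the all-zero dirty branch can leave the clean qubit in~$\ket{1}$. The sole difference is the marking step---you copy the output bit to a fresh dirty ancilla via a CNOT, obtaining acceptance probability $p_V(2-p_V)/2^{N+1}$, whereas the paper applies a controlled-$Z$ phase kick directly between the clean qubit and the output qubit of $Q_w$, obtaining $4p_w(1-p_w)/2^n$; your variant avoids the paper's ``without loss of generality $p_w<1$'' step at the price of one extra dirty qubit, but the two arguments are otherwise interchangeable.
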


The proof devises a way of simulating a given quantum circuit~$Q$ acting over $n$~qubits
in the DQC1 model with ${n+1}$~qubits.
More concretely,
this paper presents a way of letting the single clean qubit of the DQC1 model
play two roles simultaneously:
Now the clean qubit not only serves as the flag qubit
that indicates whether the remaining $n$~qubits form $\ket{0}^{\tensor n}$,
but also serves as the output qubit of the simulation of $Q$.
The simulation of $Q$ proceeds as follows:
\begin{step}
\item
  Start from the initial state ${\ketbra{0} \tensor \big(\frac{I}{2}\bigr)^{\tensor n}}$,
and flip the first qubit (the clean qubit)
if the last $n$ qubits (the completely mixed states) are in state $\ket{0}^{\otimes n}$.
\item
  Conditioned on the first qubit being $\ket{1}$,
  apply the quantum circuit~$Q$ to the last $n$ qubits.
\item
  If the state after the simulation corresponds to the accepting state in the simulated computation,
  flip the phase of the first qubit.
\item
  Perform the inverse of the unitary operation of Step~2,
  and further perform the inverse of Step~1.
\item
  Measure the first qubit in the computational basis,
  and accept iff this results in~$1$.
\end{step}
The point is that, when $Q$ realizes an NQP-type computation,
this DQC1-type computation is still a computation of NQP-type:
the acceptance probability of this DQC1-type computation is nonzero
if and only if the input is a yes-instance.
The $\SBQP$ case is proved similarly.

The main consequence of Theorem~\ref{Theorem: NQP = NQ[1]P and SBQP = SBQ[1]P} is the following,
which answers the question posed by Knill and Laflamme 
under a very plausible complexity assumption.

\begin{theorem}\label{main_thm_informal}
The DQC1 model is not weakly simulatable, unless ${\PH = \AM}$.
\end{theorem}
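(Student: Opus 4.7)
The plan is to leverage Theorem~\ref{Theorem: NQP = NQ[1]P and SBQP = SBQ[1]P} in tandem with a Toda-type collapse that places $\PH$ inside a $\BP$-closure of $\NQP$. The key intermediate claim I would prove is that classical weak simulatability of the DQC1 model implies $\NQoneP \subseteq \NP$.

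To establish this intermediate claim, suppose there is a polynomial-time classical randomized algorithm that, given a description of a DQC1 circuit, samples from the distribution of its single measurement outcome. Take any language $L \in \NQoneP$, witnessed by a uniform DQC1 family whose acceptance probability is strictly positive precisely on inputs $x \in L$ and zero otherwise. Running the classical sampler on this circuit description, the event ``the sampler outputs~$1$'' has positive probability over its internal randomness if and only if $x \in L$. Equivalently, $x \in L$ iff there exists a random string causing the sampler to accept; this is an $\NP$ predicate, so $\NQoneP \subseteq \NP$. By Theorem~\ref{Theorem: NQP = NQ[1]P and SBQP = SBQ[1]P}, this upgrades to $\NQP \subseteq \NP$.

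Next I would invoke a Toda-style theorem: combining Toda's theorem in the Toda--Tarui form with the Fenner--Green--Homer--Pruim characterization $\NQP = \coCequalP$, one has $\PH \subseteq \BP \cdot \NQP$. Substituting $\NQP \subseteq \NP$,
\[
  \PH \;\subseteq\; \BP \cdot \NQP \;\subseteq\; \BP \cdot \NP \;=\; \AM.
\]
Combined with the trivial inclusion $\AM \subseteq \PH$, this yields the desired equality $\PH = \AM$.

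The step I expect to be most delicate is the reduction of $\NQoneP$ to $\NP$: the standard notion of ``weak simulation'' typically allows the classical sampler some bounded total-variation error, whereas my argument requires ``positive acceptance probability'' to translate faithfully to ``there exists a random string leading to acceptance''. If only additive-error weak simulation is assumed, I would instead route the argument through $\SBQoneP$, using the $\SBQP$ half of Theorem~\ref{Theorem: NQP = NQ[1]P and SBQP = SBQ[1]P} together with the inclusion $\SBP \subseteq \AM$; the overall collapse $\PH = \AM$ is the same, but the approximation bookkeeping becomes heavier.
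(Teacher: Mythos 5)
Your proposal is correct and follows essentially the same route as the paper: it reduces weak simulatability to the inclusion $\NQP = \NQoneP \subseteq \NP$ (Lemma~\ref{lm:simulation}), then applies the Toda--Ogiwara/Tarui collapse $\PH \subseteq \BP\cdot\coCequalP = \BP\cdot\NQP \subseteq \BP\cdot\NP = \AM$ via the Fenner--Green--Homer--Pruim characterization (Lemma~\ref{lem:NQP=NP}). Your closing caveat about additive-error simulation and the fallback through $\SBQP$ and $\SBP \subseteq \AM$ is exactly how the paper handles that case in Lemma~\ref{lem:NQP in SBP}.
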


Towards the contraposition of the statement,
suppose that the DQC1 model were weakly simulatable.
This would in particular imply that
the above NQP-type computation in the DQC1 model is weakly simulatable.
Hence, the weak-simulatability assumption would result in a classical computation
whose acceptance probability is nonzero if and only if the input is a yes-instance,
which implies the inclusion~${\NQP \subseteq \NP}$.
Since ${\NQP = \coCequalP}$~\cite{FenGreHomPru99RSPA}, 
this inclusion is sufficient to show the collapse of the polynomial-time hierarchy to $\AM$ (and thus, to the second level),
by combining known properties in classical complexity theory. 
The same consequence can be derived by using $\SBQP$ instead of $\NQP$:
now the inclusion~${\SBQP \subseteq \SBP}$ follows from the weak-simulatability assumption,
for the class~$\SBP$ introduced in Ref.~\cite{BohGlaMei06JCSS},
and the fact~${\SBP \subseteq \AM}$ therein is used. 

\paragraph{Further results.}
The above argument based on $\NQP$ (or $\SBQP$) can replace
the existing argument based on $\postBQP$,
which was developed in Ref.~\cite{BreJozShe11RSPA}, and has appeared frequently in the literature~\cite{AarArk13ToC,Bro14arXiv,JozVan14QIC,MorFujFit14PRL,TakTanYamTan14arXiv,TakYamTan14QIC}.
In particular, it can be used even when discussing the unsimulatability of other sub-universal computing models
such as the IQP model and the Boson sampling. 
This also weakens
the complexity assumption necessary to prove the classical unsimulatability
of each of such models
(the collapse of the polynomial-time hierarchy is to the third level when using $\postBQP$,
which is now to the second level using $\NQP$ or $\SBQP$).

Finally, this paper also investigates the classical simulatability of the DQC1 and DQC1$_m$ models when additional restrictions are imposed,
as has been done for other sub-universal quantum models~\cite{BreJozShe11RSPA,Bro14arXiv,NiVan13QIC,TakTanYamTan14arXiv}. 
First, recall that
any constant-depth quantum circuit 
with a single output qubit is \emph{strongly simulatable}~\cite{FenGreHomZha05FCT},
that is, its output probability distribution is computable classically in polynomial time. 
This in particular implies that,
if the circuit used in the DQC1 model is restricted to a constant-depth one,
such a computation is strongly simulatable.
This paper extends this fact to the DQC1$_m$ model:
it is strongly simulatable if the circuit used in the DQC1$_m$ model is of constant depth.
In the case where the circuit used is of depth logarithmic,
the DQC1$_m$ model is shown \emph{not} to be weakly simulatable 
unless the polynomial-time hierarchy collapses to the second level.
This shows a clear distinction on the (weak) simulatability
between logarithmic-depth and constant-depth circuits in the DQC1$_m$ model.
Finally, it is proved that the IQP DQC1$_m$ model is strongly simulatable
for every polynomially-bounded function~$m$,
where the IQP DQC1$_m$ model is a restricted DQC1$_m$ model
such that quantum circuits used must consist of only IQP-type gates. 
As was shown in Ref.~\cite{BreJozShe11RSPA},
IQP circuits with polynomially many output qubits are not even weakly simulatable
unless the polynomial-time hierarchy collapses. 
Hence, our result suggests that the simulatability of IQP circuits changes drastically,
if the clean initial state~$\ket{0}^{\tensor n}$
is replaced by the one-clean-qubit initial state~${\ketbra{0} \tensor \bigl(\frac{I}{2}\bigr)^{\tensor (n-1)}}$.

\section{Preliminaries}\label{section_prelim}

\paragraph{Circuits and gate sets.}
A family ${\{C_w\}_{w \in \set{0,1}^\ast}}$ of randomized or quantum circuits 
is called {\em polynomial-time uniformly generated} 
if there is a deterministic algorithm that given input $w$, 
outputs a classical description of $C_w$ in time polynomial in $\abs{w}$ 
(and thus, the size of $C_w$ is also polynomially bounded in $\abs{w}$).

For quantum circuits, this paper considers only unitary quantum circuits
implemented with the gate set consisting of the Hadamard gate (the $H$ gate), 
the CNOT gate, and the gate corresponding to the unitary operator~%
${
T
=
\bigl(
\begin{smallmatrix}
  1 & 0\\
  0 & e^{i\pi/4}
\end{smallmatrix}
\bigr)
}$.
Note that all the other gates used in this paper,
such as the NOT gate (the $X$ gate), the Toffoli gate, the $Z$ gates, 
and the controlled-$Z$ gate,
are implemented exactly in this gate set without any ancilla qubits.
The only exception is the generalized Toffoli gate
(i.e., the $k$-controlled-NOT gate~$\land_k(X)$),
which is also exactly implementable with this gate set,
but using some ancilla qubits,
according to the constructions in Ref.~\cite{Bar+95PRA}.
In the construction in Lemma~7.2 of Ref.~\cite{Bar+95PRA},
the number of necessary ancilla qubits
grows linearly with respect to the number of control qubits,
and the construction in Corollary~7.4 of Ref.~\cite{Bar+95PRA}
uses only two ancilla qubits when ${k \geq 5}$,
but an important property is that
no initializations are required for all these ancilla qubits
(and thus, all of them can actually be re-used 
when applying other generalized Toffoli gates).
In particular, even completely mixed states may be used 
for these ancilla qubits, and hence,
generalized Toffoli gates may be assumed available freely in the DQC1 model
with the above-mentioned gate set.
See Lemma~7.2 and Corollary~7.4 in Ref.~\cite{Bar+95PRA} for details.
For simplicity, in what follows,
we identify the quantum circuit~$Q$ with the unitary operator it induces.

\paragraph{DQC1 model.}
A quantum computation of DQC1 type
is a computation performed by a unitary quantum circuit~$Q$. 
It is assumed that one of the qubits to which the circuit~$Q$ is applied
is designated as the output qubit.
The DQC1-type computation specified by the circuit~$Q$ proceeds as follows.
Let $n$ denote the number of qubits $Q$ acts over.
The initial state of the computation is the $n$-qubit state
${
\rho_\init = \ketbra{0} \tensor (I/2)^{\tensor (n - 1)}
}$.
The circuit~$Q$ is then applied to this initial state,
which generates the $n$-qubit state
${
\rho_\final
=
Q \rho_\init \conjugate{Q}
}$.
Now the designated output qubit is measured in the computational basis,
where the outcome~$1$ is interpreted as ``accept''
and the outcome~$0$ is interpreted as ``reject''. 
A quantum computation of DQC1$_m$ type is defined similarly to the DQC1 case,
except that $m$~qubits are designated as the output qubits to be measured.

\paragraph{Classical simulatability.}
Following conventions,
this paper uses the following definitions of simulatability.
Consider any family~${\{ Q_w \}_{w \in \{0,1\}^\ast}}$ of quantum circuits, 
and for each circuit~$Q_w$,
suppose that $m$~output qubits are measured in the computational basis
after the application of $Q_w$ to a certain prescribed initial state
(which will be clear from the context).
Let $\function{P_w}{\{0,1\}^m}{[0,1]}$ be the probability distribution
derived from the output of $Q_w$
(i.e., ${P_w(x_1, \ldots, x_m)}$ is the probability of obtaining the measurement result~${(x_1, \ldots, x_m)}$ in ${\{0,1\}^m}$ when $Q_w$ is applied to the prescribed initial state).

The family ${\{Q_w\}_{w \in \{0,1\}^\ast}}$ is 
\emph{weakly simulatable with multiplicative error~${c \geq 1}$}
if there exists a family~${\{P'_w\}}$ of probability distributions
that can be sampled classically in polynomial time
such that, for any $w$ in ${\{0,1\}^\ast}$ and ${(x_1, \ldots, x_m)}$ in ${\{0,1\}^m}$, 
\begin{equation}
\frac{1}{c}P_w(x_1, \ldots, x_m) \leq P'_w(x_1, \ldots, x_m) \leq c P_w(x_1, \ldots, x_m).
\label{approximation}
\end{equation}

The family ${\{Q_w\}_{w \in \{0,1\}^\ast}}$ is \emph{weakly simulatable 
with exponentially small additive error}
if, for any polynomially bounded function~$q$,
there exists a family~${\{P'_w\}}$ of probability distributions
that can be sampled classically in polynomial time
such that, for any $w$ in ${\{0,1\}^\ast}$ and ${(x_1, \ldots, x_m)}$ in ${\{0,1\}^m}$,
\[
\abs{P_w(x_1, \ldots, x_m) - P'_w(x_1, \ldots, x_m)} \leq 2^{-q(\abs{w})}.
\]

\begin{remark}
The notion of weak simulatablity with multiplicative error
was first defined in Ref.~\cite{TerDiV04QIC} in a slightly different form.
The definition taken in this paper is found 
in Refs.~\cite{BreJozShe11RSPA,MorFujFit14PRL}, for instance.
The version in Ref.~\cite{TerDiV04QIC} uses
${\abs{P_w(x_1, \ldots, x_m) - P_w'(x_1, \ldots, x_m)} \leq \varepsilon P_w(x_1, \ldots, x_m)}$
instead of the bounds~(\ref{approximation}),
and these two versions are essentially equivalent.
The results in this paper hold for any $\varepsilon$ in ${[0,1)}$
when using the version of Ref.~\cite{TerDiV04QIC}.
The notion of weak simulatability with exponentially small additive error
was introduced in Ref.~\cite{TakYamTan14QIC},
and was used also in Ref.~\cite{TakTanYamTan14arXiv}.
\end{remark}

\begin{remark}
As many existing studies adopt the weak simulatablity with multiplicative error
when discussing the classical simulatability of quantum models
(Refs.~\cite{TerDiV04QIC,BreJozShe11RSPA,AarArk13ToC,NiVan13QIC,Bro14arXiv,JozVan14QIC,MorFujFit14PRL}, for instance),
it would be sufficiently reasonable to use this notion also in the case of the DQC1 model,
which in particular makes it possible to discuss the power of the DQC1 model
along the line of these existing studies.
On the contrary, as discussed in Refs.~\cite{AarArk13ToC,BreJozShe11RSPA}, 
the notion of weak simulatability with \emph{polynomially} small additive error
is perhaps much more desirable for experimentally verifying the superiority of a quantum computation model.
Proving or disproving classical simulatability under this notion
is one of the most important open problems in most of sub-universal quantum models
including the DQC1 model.
\end{remark}

\paragraph{Complexity classes.}
For a quantum circuit~$Q$,
let ${p_\acc(Q)}$ (resp., ${p^\DQCone_\acc(Q)}$)
denote the acceptance probability of $Q$
(resp., the acceptance probability of $Q$ in the DQC1 model),
i.e., the probability that
the measurement on the designated output qubit of $Q$
in the computational basis results in~$1$
after $Q$ is performed with the initial state~$\ket{0}^{\tensor n}$
(resp., ${\ketbra{0} \tensor \bigl( \frac{I}{2} \bigr)^{\tensor (n-1)}}$), 
where $n$ is the number of qubits $Q$ acts over.
For a randomized circuit~$C$,
its acceptance probability ${p_\acc(C)}$ is defined as
the probability that the designated output bit of $C$ is~$1$
after $C$ is performed with input~$0^n$,
where $n$ is the number of bits $C$ acts over.

A language $L$ is in $\NQP$ 
iff there exists a polynomial-time uniformly generated family 
${\{ C_w \}_{w \in \{0,1\}^\ast}}$ of quantum circuits such that 
for every ${w \in \{0,1\}^\ast}$,
(i) if $w$ is in $L$, ${p_\acc(C_w) > 0}$,
and (ii) if $w$ is not in $L$, ${p_\acc(C_w) = 0}$.

A language $L$ is in $\SBP$ (resp., $\SBQP$) 
iff there exist a polynomially bounded function~$q$ 
and a polynomial-time uniformly generated family~${\{ C_w \}_{w \in \{0,1\}^\ast}}$
of randomized (resp., quantum) circuits such that, 
for every ${w \in \{0,1\}^\ast}$,
(i) if $w$ is in $L$, ${p_\acc(C_w) \geq 2^{-q(\abs{w})}}$,
and 
(ii) if $w$ is not in $L$, ${p_\acc(C_w) \leq 2^{-q(\abs{w})-1}}$.

Note that the classes~$\SBP$ and $\SBQP$
remain unchanged even when the two thresholds $2^{-q(\abs{w})}$ and $2^{-q(\abs{w})-1}$ 
are replaced by ${c 2^{-q(\abs{w})}}$ and ${c' 2^{-q(\abs{w})}}$, respectively,
for any two constants~$c$ and $c'$ satisfying ${0 < c' < c \leq 1}$.

The complexity classes~$\NQoneP$ and $\SBQoneP$ are defined by
replacing every ${p_\acc(C_w)}$ with ${p^\DQCone_\acc(C_w)}$
in the definitions of $\NQP$ and $\SBQP$, respectively.

\section{Main Results}

To prove Theorem~\ref{Theorem: NQP = NQ[1]P and SBQP = SBQ[1]P},
we start with analyzing the DQC1-type computation presented in Section~\ref{Section: introduction}.

Consider any polynomial-time uniformly generated family~${\{Q_w\}_{w \in \set{0,1}^\ast}}$ of quantum circuits.
Fix $w$ in ${\set{0,1}^\ast}$,
and let $n$ be the number of qubits $Q_w$ acts over.
Without loss of generality, it is assumed that
the first qubit is the designated output qubit of $Q_w$.
Let $p_w$ be the probability $p_\acc(Q_w)$ of obtaining the classical outcome~$1$
when measuring the output qubit of $Q_w$ in the computational basis, 
i.e., 
\[ 
p_w=\norm{\Pi_1Q_w\ket{0}^{\otimes n}}^2,
\]
where $\Pi_1$ is the projection~${\ketbra{1} \tensor I^{\tensor (n-1)}}$.

\begin{figure}[t!]
\begin{center}
\hspace{-4mm}
\includegraphics[scale=0.6]{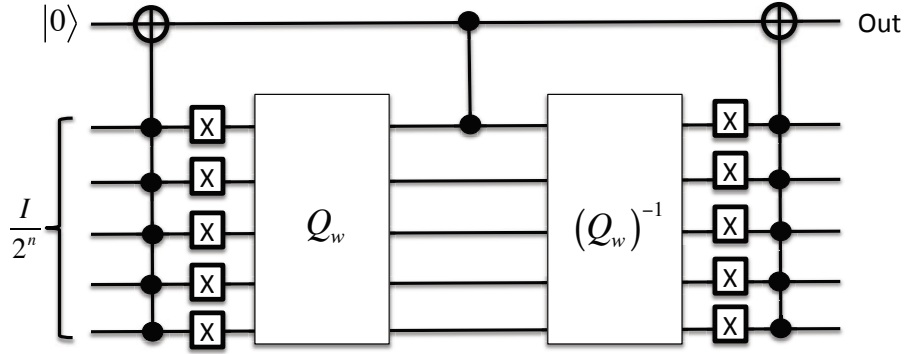}
\caption{Circuit $D_w$. Note that the first and the last layers are generalized Toffoli gates, 
where black circles are control parts and $\oplus$ is the target part, and the gate represented by two black circles connected by a vertical line is a controlled-$Z$ gate.}
\label{Fig:DQC1_1}
\end{center}
\end{figure}

Consider the quantum circuit $D_w$ depicted in Fig.~\ref{Fig:DQC1_1},
which acts over ${n + 1}$ qubits.
As indicated in Fig.~\ref{Fig:DQC1_1},
the first qubit is the output qubit of $D_w$.
We analyze the DQC1-type computation induced by $D_w$.

\begin{lemma}\label{DQC1_1circuit}
\label{lm:main}
Let $\tilde{p}_w$ be 
the probability that the measurement on the output qubit of $D_w$
in the computational basis results in~$1$
after $D_w$ is performed with the initial state~${\ketbra{0} \tensor \bigl( \frac{I}{2} \bigr)^{\tensor n}}$.
Then,
$\tilde{p}_w=\frac{4}{2^n}p_w(1-p_w)$.
\end{lemma}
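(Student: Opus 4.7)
The plan is to use linearity together with the decomposition $\bigl(\frac{I}{2}\bigr)^{\tensor n} = \frac{1}{2^n}\sum_{x \in \{0,1\}^n} \ketbra{x}$, so that $\tilde p_w$ becomes $2^{-n}$ times the sum over $x$ of the acceptance probabilities starting from the pure input $\ket{0}\ket{x}$. I would first dispose of all branches with $x\neq 0^n$: for these the initial generalized Toffoli (which flips the clean qubit only when the last $n$ qubits are in $\ket{0^n}$) acts as the identity, both the controlled-$Q_w$ and the controlled-$Z$ then do nothing because the clean qubit stays in $\ket{0}$, and the two inverse steps are also trivial, so measurement of the first qubit deterministically returns $0$.

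All contributions therefore come from the $x = 0^n$ branch, and the next step is to propagate $\ket{0}\ket{0^n}$ through the five layers of $D_w$. After the first Toffoli the state is $\ket{1}\ket{0^n}$; after the controlled-$Q_w$ it becomes $\ket{1}\ket{\psi}$ with $\ket{\psi} := Q_w\ket{0^n}$; the controlled-$Z$ between the clean qubit and the designated output qubit yields $\ket{1}\,Z_1\ket{\psi}$, where $Z_1$ denotes the Pauli $Z$ on the output qubit; and undoing the controlled-$Q_w$ produces $\ket{1}\ket{\phi}$ with $\ket{\phi} := Q_w^{\dagger} Z_1 Q_w \ket{0^n}$.

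The crux is to expand $\ket{\phi} = c\ket{0^n} + \ket{\phi^{\perp}}$ with $\braket{0^n}{\phi^{\perp}} = 0$, and to evaluate the coefficient as
\[
c \;=\; \bra{\psi} Z_1 \ket{\psi} \;=\; (1 - p_w) - p_w \;=\; 1 - 2p_w,
\]
using that $Z_1$ has eigenvalues $\pm 1$ on the two computational-basis states of the output qubit, whose weights under $\ket{\psi}$ are $1-p_w$ and $p_w$ respectively, by the very definition $p_w = \norm{\Pi_1 Q_w\ket{0^n}}^2$. The final (inverse) Toffoli then sends $\ket{1}\ket{0^n}$ to $\ket{0}\ket{0^n}$ and leaves $\ket{1}\ket{\phi^{\perp}}$ fixed, so the state just before measurement is $c\ket{0}\ket{0^n} + \ket{1}\ket{\phi^{\perp}}$. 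Measuring the clean qubit yields $1$ with probability $\norm{\ket{\phi^{\perp}}}^2 = 1 - c^2 = 4p_w(1-p_w)$ in this branch, and multiplying by the $2^{-n}$ weight gives the claimed expression.

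The only subtle point is that last inverse Toffoli: naively one might hope it simply cancels the first one, but the intervening $Z_1$ has placed amplitude on components where the last $n$ qubits are not in $\ket{0^n}$, and only the $\ket{0^n}$ component is flipped back to a clean first qubit. Tracking this mixing is precisely what produces the factor $1-c^2 = 4p_w(1-p_w)$; everything else is a direct amplitude calculation on the single interesting branch.
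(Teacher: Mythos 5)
Your proof is correct and follows essentially the same route as the paper's: both decompose the maximally mixed register over computational basis states, dismiss the $x \neq 0^n$ branches as trivially rejecting, and reduce the $x = 0^n$ branch to the overlap $\bra{0}^{\tensor n}\conjugate{Q_w}(I^{\tensor n} - 2\Pi_1)Q_w\ket{0}^{\tensor n} = 1-2p_w$, yielding acceptance probability $1-(1-2p_w)^2 = 4p_w(1-p_w)$ in that branch. Your version merely makes explicit, via the decomposition $\ket{\phi} = c\ket{0^n} + \ket{\phi^\perp}$, the amplitude bookkeeping that the paper phrases as the last generalized Toffoli being ``activated with probability $(1-2p_w)^2$.''
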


\begin{proof}
First, consider the case where the first generalized Toffoli gate is activated, 
meaning that the content of the first qubit is flipped from 0 to 1
and all the $n$ input qubits of $Q_w$ are set to the state $\ket{0}^{\tensor n}$.
This occurs with probability~$\frac{1}{2^n}$.
In this case, the controlled-$Z$ gate flips the phase 
if and only if the content of the output qubit of $Q_w$ is 1, 
that is, this gate performs the operation ${I^{\tensor n} - 2 \Pi_1}$ over the last $n$~qubits. 
Hence, the state of the last $n$~qubits
immediately after applying $\conjugate{Q_w}$ is given by
${
\conjugate{Q_w} \bigl( I^{\tensor n} - 2 \Pi_1 \bigr) Q_w \ket{0}^{\tensor n}
}$,
and thus, the last generalized Toffoli gate is activated with probability
\[
\bigabs{
  \bra{0}^{\tensor n} \conjugate{Q_w} \bigl( I^{\tensor n} - 2 \Pi_1 \bigr) Q_w \ket{0}^{\tensor n}
}^2
=
\bigabs{
  1 - 2 \norm{\Pi_1 Q_w \ket{0}^{\tensor n}}^2
}
=
(1 - 2p_w)^2.
\]
The probability that the last generalized Toffoli gate is \emph{not} activated
is hence $1-(1-2p_w)^2=4p_w(1-p_w)$. With this probability, the outcome of $D_w$ is 1
(on the condition that the first generalized Toffoli is activated).

Now consider the case where the first generalized Toffoli gate is \emph{not} activated. 
In this case, the first qubit is still $\ket{0}$ after applying the generalized Toffoli gate.
Thus, the controlled-$Z$ gate is not activated, implying that $Q_w$ and $\conjugate{Q_w}$
are cancelled.
Therefore, the last generalized Toffoli gate is not activated, either,
and the outcome of $D_w$ is always $0$.

The total probability of obtaining the outcome 1 is thus $\frac{4}{2^n}p_w(1-p_w)$. 
\end{proof}

From Lemma~\ref{lm:main},
Theorem~\ref{Theorem: NQP = NQ[1]P and SBQP = SBQ[1]P} is easily proved as follows.

\begin{proof}[Proof of Theorem~\ref{Theorem: NQP = NQ[1]P and SBQP = SBQ[1]P}]
Suppose that a language $L$ is in $\NQP$,
and let ${\{Q_w\}}$ 
be a polynomial-time uniformly generated family of quantum circuits 
that witnesses this fact.
By the definition of $\NQP$,
for every ${w \in \set{0,1}^\ast}$,
the acceptance probability $p_w$ of the circuit~$Q_w$
is positive if and only if $w$ is in $L$.
Without loss of generality, one can assume that ${p_w < 1}$ for every $w$.

From each $Q_w$, we construct a quantum circuit~$D_w$ as defined in Fig.~\ref{Fig:DQC1_1},
which provides the polynomial-time uniformly generated family~${\{D_w\}}$ of quantum circuits. 
From Lemma~\ref{lm:main},
this ${\{D_w\}}$ ensures that $L$ is in $\NQoneP$, 
as the probability~$\tilde{p}_w$ associated with $D_w$ is nonzero
if and only if ${0 < p_w < 1}$.
The other containment~${\NQoneP \subseteq \NQP}$ is trivial.
That ${\SBQP = \SBQoneP}$ can be proved similarly\footnote{Notice that without loss of generality we can assume that $n$ depends on $|w|$ only, 
and thus the proof can be done based on the definition of $\SBQP$, instead of $\NQP$.}.
\end{proof}

Next, we give a formal statement of Theorem~\ref{main_thm_informal}.

\begin{theorem}\label{DQC1_1}
If any polynomial-time uniformly generated family of quantum circuits,
when used in the DQC1-type computations, 
is weakly simulatable with multiplicative error $c\geq 1$ 
or exponentially small additive error, $\PH=\AM$.
\end{theorem}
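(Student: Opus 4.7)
My plan is to use Theorem~\ref{Theorem: NQP = NQ[1]P and SBQP = SBQ[1]P} to translate the hypothesised classical simulability of the DQC1 model into a classical-complexity containment, and then to combine this with standard results to derive $\PH=\AM$. The multiplicative-error case goes through $\NQP$, and the additive-error case through $\SBQP$; both routes run in parallel.

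For the multiplicative case, take any $L\in\NQP$ and invoke the $\NQP=\NQoneP$ half of Theorem~\ref{Theorem: NQP = NQ[1]P and SBQP = SBQ[1]P} to obtain a polynomial-time uniformly generated family $\{D_w\}$ of DQC1-type computations whose acceptance probability $\tilde p_w$ is strictly positive iff $w\in L$. The hypothesised classical sampler $\{P'_w\}$ then satisfies $c^{-1}\tilde p_w\le P'_w(1)\le c\,\tilde p_w$, so $P'_w(1)>0$ iff $w\in L$; the random tape used by the polynomial-time sampler to output $1$ is exactly the sort of witness that places $L$ in $\NP$. Hence $\NQP\subseteq\NP$. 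For the additive case, I would start from $L\in\SBQP$ and use $\SBQP=\SBQoneP$ to obtain a DQC1 family with $\tilde p_w\ge 2^{-q(|w|)}$ on yes-instances and $\tilde p_w\le 2^{-q(|w|)-1}$ on no-instances; requesting the additive simulation error to be, say, $2^{-q(|w|)-3}$ preserves this gap up to constant factors, and after absorbing those constants (as permitted by the remark following the definition of $\SBP$) the sampler witnesses $L\in\SBP$. Thus $\SBQP\subseteq\SBP$.

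The final step is to pass from these containments to $\PH=\AM$ via classical results. Along the $\NQP$ route, the identity $\NQP=\coCequalP$~\cite{FenGreHomPru99RSPA} rewrites $\NQP\subseteq\NP$ as $\coCequalP\subseteq\NP$; combining this with Toda's theorem, the standard identification of $\BP$-operator applications on related counting classes, and $\BP\cdot\NP=\AM$, and using closure of $\PH$ under complement, is well known to force $\PH\subseteq\AM$. Along the $\SBQP$ route, $\SBP\subseteq\AM$~\cite{BohGlaMei06JCSS} yields $\SBQP\subseteq\AM$, and a parallel Toda-style argument based on the containment of appropriate counting classes in $\SBQP$ again gives $\PH\subseteq\AM$. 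Since $\AM\subseteq\PH$ is immediate, $\PH=\AM$ in either case. The principal technical obstacle is this last step: carefully assembling the chain of classical results (Toda's theorem, $\BP$-operator identities, and closure properties of $\AM$) that turns a collapse-style containment into $\PH\subseteq\AM$. The quantum-side work is already encapsulated by Theorem~\ref{Theorem: NQP = NQ[1]P and SBQP = SBQ[1]P}, so the remaining bookkeeping is delicate but purely classical.
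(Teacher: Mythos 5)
Your proposal is correct and follows essentially the same route as the paper: reduce the simulability hypothesis to the containments ${\NQP\subseteq\NP}$ (multiplicative case) and ${\NQP\subseteq\SBP}$ (additive case --- which you obtain via ${\SBQP\subseteq\SBP}$ together with ${\NQP\subseteq\SBQP}$, a variant the paper itself endorses in a remark), and then collapse $\PH$ using ${\NQP=\coCequalP}$, the Toda--Ogiwara containment ${\PH\subseteq\BP\cdot\coCequalP}$, ${\BP\cdot\NP=\AM}$, and ${\SBP\subseteq\AM}$ with the amplified $\widehat{\BP}$ operator. The ``well known'' final step you defer is exactly the content of the paper's Lemmas~\ref{lem:NQP=NP} and~\ref{lem:NQP in SBP}, so no essential idea is missing.
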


Theorem \ref{DQC1_1} follows directly from Lemmas~\ref{lm:simulation}, 
\ref{lem:NQP=NP}, and \ref{lem:NQP in SBP} below.

\begin{lemma}
\label{lm:simulation}
If any polynomial-time uniformly generated family of quantum circuits,
when used in the DQC1-type computations, 
is weakly simulatable with multiplicative error $c\geq 1$ 
(resp., exponentially small additive error), then ${\NQP=\NP}$ (resp., ${\NQP \subseteq \SBP}$). 
\end{lemma}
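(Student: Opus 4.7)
The plan is to feed the hypothesised classical simulator the DQC1 circuit family built in Lemma~\ref{lm:main}, so that the sampler's output distribution inherits the ``positive iff yes-instance'' property of the underlying $\NQP$ witness. Fix any $L \in \NQP$ and a witness family $\{Q_w\}$; by a routine padding (for example, tensor in a Hadamard-prepared ancilla and take the new output qubit to be the AND, computed via a Toffoli into a fresh qubit, of the original output with that ancilla) we may assume that $p_w := p_\acc(Q_w) \in [0,1)$ for every $w$, while preserving $p_w > 0 \iff w \in L$. Build $\{D_w\}$ as in Fig.~\ref{Fig:DQC1_1}: by Lemma~\ref{lm:main}, its DQC1 acceptance probability $\tilde p_w = \frac{4}{2^n} p_w(1-p_w)$ is strictly positive iff $w \in L$.

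For the multiplicative-error statement, the hypothesis supplies a polynomial-time classical randomized algorithm producing a single bit with probability of output $1$ equal to some $\tilde p'_w$ satisfying $\tilde p_w/c \leq \tilde p'_w \leq c\tilde p_w$. Hence $\tilde p'_w > 0 \iff w \in L$, and an $\NP$ machine decides $L$ by simply guessing the random string consumed by the sampler and accepting iff that string drives the sampler to output $1$. This yields $\NQP \subseteq \NP$; since $\NP \subseteq \NQP$ is immediate, equality follows.

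For the exponentially-small additive case I need a polynomial lower bound $\tilde p_w \geq 2^{-s(|w|)}$ whenever $w \in L$. This comes from the structure of the gate set $\{H, \mathrm{CNOT}, T\}$: every amplitude of a poly-size circuit over this basis lies in $\frac{1}{2^{k/2}} \mathbb{Z}[e^{i\pi/4}]$ for some polynomial $k$, so any nonzero acceptance probability of such a circuit is at least $2^{-\mathrm{poly}(|w|)}$; tracking this through the formula for $\tilde p_w$ gives the claimed bound on yes-instances. Applying the hypothesis with $q := s + 2$, the sampler has acceptance probability at least $\tfrac{3}{4} \cdot 2^{-s(|w|)}$ on yes-instances and at most $\tfrac{1}{4} \cdot 2^{-s(|w|)}$ on no-instances, which by the constants-are-interchangeable remark recorded right after the $\SBP$ definition places $L$ in $\SBP$.

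The main obstacle I anticipate is the gap lower bound on $\tilde p_w$ used in the additive case: one must verify that the denominator bound on amplitudes is genuinely polynomial for the chosen gate set, and also that the padding step ensuring $p_w < 1$ does not silently destroy this lower bound. Once that technicality is settled, both halves of the lemma reduce to a direct unwinding of Lemma~\ref{lm:main} together with the two weak-simulation definitions.
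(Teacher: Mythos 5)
Your proposal is correct and follows essentially the same route as the paper: construct $D_w$ via Lemma~\ref{lm:main}, feed it to the hypothesised classical simulator, and read off ${\NQP \subseteq \NP}$ (resp.\ ${\NQP \subseteq \SBP}$) from the multiplicative (resp.\ additive) error guarantee combined with an inverse-exponential lower bound on nonzero acceptance probabilities. The only difference is one of detail: the paper simply asserts the normalization ${p_w < 1}$ and the bound ${\tilde{p}_w \geq 2^{-q(\abs{w})}}$ on yes-instances, whereas you justify them explicitly via the padding ancilla and the algebraic structure of amplitudes over the gate set $\set{H, \mathrm{CNOT}, T}$.
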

\begin{proof}
Fix any language~$L$ in $\NQP$. 
By Theorem~\ref{Theorem: NQP = NQ[1]P and SBQP = SBQ[1]P},
 there is a polynomial-time uniformly generated family~${\{D_w\}}$ of quantum circuits
such that, for every ${w \in \{0, 1 \}^\ast}$,
the acceptance probability $\tilde{p}_w$
in the DQC1-type computation induced by $D_w$
is nonzero if and only if $w$ is in $L$.
From the assumption of this lemma,
there exists a polynomial-time uniformly generated family~${\{C_w\}}$
of randomized circuits such that weakly simulates ${\{D_w\}}$
with multiplicative error~$c\geq 1$. 
By the definition, 
the probability that the circuit~$C_w$ outputs 1 is nonzero
if and only if $\tilde{p}_w$ is nonzero,
which happens only when $w$ is in $L$.
This implies that $L$ is in $\NP$.

In the case where ${\{C_w\}}$ weakly simulates ${\{D_w\}}$ 
with exponentially small additive error,
we use the fact that
$\tilde{p}_w$ with each $w$ in $L$ may be assumed to be bounded from below by $2^{-q(\abs{w})}$
for a certain polynomially bounded function~$q$.
This implies that the probability that $C_w$ outputs 1 when $w$ is in $L$
must differ from the probability when $w$ is not in $L$
by at least a constant multiplicative factor,
if we set a sufficiently small additive error, say, $2^{-q(\abs{w})}/1000$,
which shows that $L$ is in $\SBP$.
\end{proof}

Let $\calC$ be any class of languages. A language $L$ is in $\BP\cdot\calC$ 
if there exist a language $A\in\calC$ and a polynomially bounded function~$r$ such that for every $w\in\{0,1\}^\ast$,
\[
|\Set{z\in\{0,1\}^{r(\abs{w})} }{w\in L~~\mbox{iff}~~\langle w,z\rangle\in A}| \geq 
\frac{2}{3}\cdot 2^{r(\abs{w})}.
\] 
It is easy to see that $\AM=\BP\cdot\NP$. In Ref.~\cite{TodOgi92SIComp}, the following variant of the BP operator was defined: $L$ is in $\widehat{\BP}\cdot \calC$ 
if for every polynomially bounded function~$q$, there exist a language $A\in\calC$ and a polynomially bounded function~$r$
such that for every ${w\in\{0,1\}^\ast}$,
\[
|\Set{z\in\{0,1\}^{r(\abs{w})} }{w\in L~~\mbox{iff}~~\langle w,z\rangle\in A}|
\geq (1-2^{-q(|w|)})2^{r(\abs{w})}.
\] 
By the standard amplification of $\AM$, one can see that $\AM=\BP\cdot\NP=\widehat{\BP}\cdot\NP$.
 
\begin{lemma}\label{lem:NQP=NP}
If $\NQP=\NP$, then $\PH=\AM$.
\end{lemma}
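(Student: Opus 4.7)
The plan is to stitch together three classical facts. First, the characterization ${\NQP = \coCequalP}$ proved by Fenner, Green, Homer, and Pruim. Second, a Toda-style inclusion of the polynomial-time hierarchy inside a BP-operator applied to $\CequalP$; specifically, the result (essentially from the Toda--Ogiwara machinery cited just above the lemma) that ${\PH \subseteq \widehat{\BP} \cdot \CequalP}$. Third, the operator identity ${\widehat{\BP} \cdot \NP = \AM}$ already noted in the excerpt (obtained from ${\AM = \BP \cdot \NP}$ by the standard amplification of $\AM$).

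Given these three ingredients, the argument is short. Assume ${\NQP = \NP}$. By fact~(1), this gives ${\coCequalP = \NP}$, equivalently ${\CequalP = \coNP}$. By fact~(2), ${\PH \subseteq \widehat{\BP} \cdot \CequalP = \widehat{\BP} \cdot \coNP}$. Because the $\widehat{\BP}$ operator is defined with two-sided (in fact symmetric) error bounds, it commutes with complementation in the sense that for any class~$\calC$,
\[
\widehat{\BP} \cdot \co\calC = \co(\widehat{\BP} \cdot \calC):
\]
a language is accepted by ``most $z$'s land in $A^c$'' iff its complement is accepted by ``most $z$'s land in $A$'', and the thresholds ${1 - 2^{-q}}$ and $2^{-q}$ are each other's complements. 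Applying this with ${\calC = \NP}$ and using fact~(3) yields
\[
\widehat{\BP} \cdot \coNP = \co(\widehat{\BP} \cdot \NP) = \co\AM.
\]
Hence ${\PH \subseteq \co\AM}$. Since $\PH$ is closed under complement, this gives ${\PH = \co\PH \subseteq \AM}$. The reverse inclusion ${\AM \subseteq \PH}$ is standard, so ${\PH = \AM}$.

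The main subtlety is pinning down the correct form of the Toda-style inclusion. The original Toda theorem gives ${\PH \subseteq \BP \cdot \oplus\PP}$, which is too weak for our purposes; what is needed is the stronger ${\PH \subseteq \widehat{\BP} \cdot \CequalP}$, which is exactly the reason the ``hat'' variant~$\widehat{\BP}$ was introduced in the Toda--Ogiwara paper cited here, and exponential-precision amplification is required to turn the $\BP$ version of the inclusion into its $\widehat{\BP}$ counterpart so that the complementation argument goes through without losing a polynomial factor. The rest of the proof is just bookkeeping with operators on complexity classes.
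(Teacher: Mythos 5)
Your proof is correct and follows essentially the same route as the paper: both rest on the Toda--Ogiwara inclusion of $\PH$ in a $\BP$-type operator applied to $\CequalP$/$\coCequalP$, the identity ${\NQP = \coCequalP}$, and ${\BP\cdot\NP = \AM}$; the paper simply writes the chain directly as ${\PH \subseteq \BP\cdot\coCequalP = \BP\cdot\NQP = \BP\cdot\NP = \AM}$, avoiding your detour through complementation. Your insistence that the $\widehat{\BP}$ variant is needed is the one inaccuracy worth noting: with the symmetric ``$w\in L$ iff $\langle w,z\rangle\in A$'' definition, plain $\BP$ already commutes with complementation, so the hat operator buys nothing here (the paper reserves it for Lemma~\ref{lem:NQP in SBP}).
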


\begin{proof}
This follows from the following sequence of containments: 
\[
\PH \subseteq \BP \cdot \coCequalP = \BP \cdot \NQP = \BP \cdot \NP = \AM,
\]
where the first containment is by Corollary~2.5 in Ref.~\cite{TodOgi92SIComp} or Corollary~5.2 in Ref.~\cite{Tar93TCS},
and the next two equalities follow from the fact~${\NQP = \coCequalP}$~\cite{FenGreHomPru99RSPA}
and the assumption ${\NQP = \NP}$ of this lemma, respectively.
\end{proof}

\begin{lemma}\label{lem:NQP in SBP}
If $\NQP\subseteq\SBP$, then $\PH=\AM$.
\end{lemma}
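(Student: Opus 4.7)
The plan is to replay the chain-of-containments argument from the proof of Lemma~\ref{lem:NQP=NP}, but route through $\AM$ by way of $\SBP$ instead of directly via $\NP$. The one extra ingredient I would invoke is the inclusion $\SBP \subseteq \AM$ of \cite{BohGlaMei06JCSS}, which is already mentioned in the introduction. Combined with the hypothesis $\NQP \subseteq \SBP$ and the identity $\NQP = \coCequalP$ from \cite{FenGreHomPru99RSPA}, this gives the drop-in $\coCequalP \subseteq \AM$ that will play the role that $\coCequalP \subseteq \NP$ played in the previous lemma.

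The concrete chain I would then write is
\[
\PH \;\subseteq\; \BP \cdot \coCequalP \;=\; \BP \cdot \NQP \;\subseteq\; \BP \cdot \SBP \;\subseteq\; \BP \cdot \AM \;=\; \AM,
\]
where the first containment is Toda--Ogiwara/Tarui (exactly as in Lemma~\ref{lem:NQP=NP}), the second uses $\NQP = \coCequalP$, the third combines the hypothesis with the monotonicity of the $\BP$ operator, the fourth uses $\SBP \subseteq \AM$, and the last step collapses two layers of $\BP$ down to one. Since $\AM \subseteq \PH$ is immediate, this yields $\PH = \AM$.

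The only step that is not pure bookkeeping is the final equality $\BP \cdot \AM = \AM$. Using $\AM = \BP \cdot \NP$, this amounts to $\BP \cdot \BP \cdot \NP = \BP \cdot \NP$, which follows by concatenating the two random tapes into one and invoking the standard amplification of $\AM$; if one prefers, the whole argument can be phrased with the $\widehat{\BP}$ operator from the preliminaries, for which $\AM = \widehat{\BP} \cdot \NP$ and the same monotonicity-plus-collapse reasoning applies. I do not foresee a genuine obstacle here: the substantive work is absorbed into the cited results, and what remains is a short verification that the $\BP$ (or $\widehat{\BP}$) operator is monotone and idempotent on $\AM$.
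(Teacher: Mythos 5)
Your proof is correct and follows essentially the same route as the paper: the identical chain ${\PH \subseteq \BP\cdot\coCequalP = \BP\cdot\NQP \subseteq \BP\cdot\SBP \subseteq \AM}$ using Toda--Ogiwara, ${\NQP = \coCequalP}$, the hypothesis, ${\SBP \subseteq \AM}$, and collapse of a duplicate probabilistic operator. The only (cosmetic) difference is that the paper runs the argument with the $\widehat{\BP}$ operator and cites Toda--Ogiwara's Lemma~2.8 for removing the duplicate, whereas you use plain $\BP$ and justify ${\BP\cdot\BP\cdot\NP = \BP\cdot\NP}$ via standard $\AM$ amplification --- an equivalent bookkeeping choice that you yourself note.
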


\begin{proof}
The claim follows from the following sequence of containments:
\[
\PH
\subseteq
\widehat{\BP} \cdot \coCequalP
=
\widehat{\BP} \cdot \NQP
\subseteq
\widehat{\BP} \cdot \SBP
\subseteq
\widehat{\BP} \cdot \widehat{\BP} \cdot \NP
=
\widehat{\BP} \cdot \NP
=
\AM,
\]
where the first inclusion is by Corollary 2.5 in Ref.~\cite{TodOgi92SIComp},
and we have used the fact~${\NQP = \coCequalP}$~\cite{FenGreHomPru99RSPA},
the assumption~${\NQP \subseteq \SBP}$ of this lemma,
and the facts~${\SBP \subseteq \AM}$~\cite{BohGlaMei06JCSS}
and ${\widehat{\BP} \cdot \NP = \AM}$,
as well as Lemma~2.8 in Ref.~\cite{TodOgi92SIComp} on the removability of a duplicate $\widehat{\BP}$ operator.
\end{proof}

\begin{remark}
The collapse of $\PH$ to $\AM$ follows from the assumption~${\SBQP = \SBP}$, too,
as ${\NQP \subseteq \SBQP}$ holds (note that the acceptance probability 
of NQP-type computation on any yes-instance is at least $1/2^q$ 
for some polynomial $q$).
\end{remark}

The following theorem is obtained by Lemmas \ref{lem:NQP=NP} 
and \ref{lem:NQP in SBP}, and a slight modification of the proof of Lemma \ref{lm:simulation},  
in which Theorem~\ref{Theorem: NQP = NQ[1]P and SBQP = SBQ[1]P}
 is replaced by the assumption of the theorem.

\begin{theorem}\label{general_collapse}
Assume that given a polynomial-time uniformly generated family $\{Q_w\}$ 
of quantum circuits, 
there is a polynomial-time uniformly generated family $\{D_w\}$ of quantum circuits 
conforming to a computing model ${\cal M}$ whose acceptance probability 
is positive (resp.~at least $\frac{p_\acc(Q_w)}{2^{r(|w|)}}$ for some polynomial $r$) 
if $p_\acc(Q_w)>0$, and zero if $p_\acc(Q_w)=0$. 
If ${\cal M}$ is weakly simulatable with multiplicative error $c\geq 1$ 
(resp.~exponentially small additive error), then $\PH=\AM$.
\end{theorem}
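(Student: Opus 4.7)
The plan is to mimic, almost verbatim, the derivation of Theorem~\ref{DQC1_1} from Lemmas~\ref{lm:simulation}, \ref{lem:NQP=NP}, and \ref{lem:NQP in SBP}, with the only modification being that the role of Theorem~\ref{Theorem: NQP = NQ[1]P and SBQP = SBQ[1]P} is replaced by the hypothesis of the present theorem. That is, given any ${L \in \NQP}$ witnessed by a polynomial-time uniformly generated family~${\{Q_w\}}$ of quantum circuits, I would invoke the assumption of the theorem to obtain a polynomial-time uniformly generated family~${\{D_w\}}$ of $\calM$-type circuits whose acceptance probabilities~${p_w^\calM(D_w)}$ preserve the zero/nonzero behavior of~${p_\acc(Q_w)}$ (or, in the quantitative version, dominate it up to a factor~${2^{-r(\abs{w})}}$).

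For the multiplicative error case, the classical family~${\{C_w\}}$ produced by the weak-simulatability assumption satisfies ${p_\acc(C_w) > 0}$ if and only if ${p_w^\calM(D_w) > 0}$, which by the hypothesis is equivalent to ${p_\acc(Q_w) > 0}$, i.e.\ to ${w \in L}$. This places $L$ in $\NP$, showing ${\NQP \subseteq \NP}$ (and trivially the reverse), so by Lemma~\ref{lem:NQP=NP} we conclude ${\PH = \AM}$. For the additive error case, I would use the standard fact that for $\NQP$-circuits over the gate set~${\{H,\,\mathrm{CNOT},\,T\}}$, the acceptance probability on any yes-instance is bounded below by ${2^{-q(\abs{w})}}$ for some polynomially bounded~$q$; combined with the quantitative guarantee ${p_w^\calM(D_w) \geq p_\acc(Q_w)/2^{r(\abs{w})}}$, this gives a polynomial lower bound~${2^{-(q+r)(\abs{w})}}$ on yes-instances, so by taking the additive error sufficiently small (e.g.\ ${2^{-(q+r)(\abs{w})}/1000}$) one recovers a gap between yes- and no-instances and concludes ${L \in \SBP}$. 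Then Lemma~\ref{lem:NQP in SBP} yields ${\PH = \AM}$.

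Since every ingredient is already available in the preceding development, no genuinely new obstacle appears; the only mildly technical point is that the $\NQP$-lower-bound fact used in the additive case must be invoked explicitly. Accordingly, the write-up would be a short proof that (i) replays the proof of Lemma~\ref{lm:simulation} with ${\{D_w\}}$ supplied by the hypothesis of the theorem rather than by Theorem~\ref{Theorem: NQP = NQ[1]P and SBQP = SBQ[1]P}, establishing ${\NQP \subseteq \NP}$ or ${\NQP \subseteq \SBP}$ according to the type of error, and (ii) cites Lemma~\ref{lem:NQP=NP} or Lemma~\ref{lem:NQP in SBP} to finish with ${\PH = \AM}$.
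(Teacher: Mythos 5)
Your proposal is correct and matches the paper's own argument exactly: the paper derives Theorem~\ref{general_collapse} by rerunning the proof of Lemma~\ref{lm:simulation} with the family~$\{D_w\}$ supplied by the hypothesis in place of Theorem~\ref{Theorem: NQP = NQ[1]P and SBQP = SBQ[1]P}, and then invoking Lemma~\ref{lem:NQP=NP} or Lemma~\ref{lem:NQP in SBP}. Your explicit mention of the lower bound $2^{-q(\abs{w})}$ on yes-instance acceptance probabilities in the additive-error case is precisely the fact the paper also relies on there.
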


Theorem \ref{general_collapse} implies that the collapse 
of the polynomial-time hierarchy in the previous results 
 on the IQP model~\cite{BreJozShe11RSPA}, the Boson sampling model~\cite{AarArk13ToC}, etc.~can be also improved from the third level to the second level. 
In fact, for all the previous results using the argument based on $\post\BQP$ 
\cite{AarArk13ToC,BreJozShe11RSPA,Bro14arXiv,JozVan14QIC,MorFujFit14PRL,TakTanYamTan14arXiv,TakYamTan14QIC},  
we can apply Theorem \ref{general_collapse} to them,  
since the corresponding circuit $D_w$ is a postselected circuit,
and the outcome of $D_w$ is considered to be accept 
if and only if the output qubit and the postselection qubit are both $1$. 
Moreover, our argument clarifies that Theorem~\ref{general_collapse} holds 
for any multiplicative constant $c\geq 1$, while $c$ is restricted to $1\leq c<\sqrt{2}$ 
in Refs.~\cite{BreJozShe11RSPA,MorFujFit14PRL}. 

\section{DQC1$_m$ model with additional restrictions}

\subsection{Depth-restricted quantum circuits}

First, we consider the case where the depth of quantum circuits, 
when used in the DQC1$_m$-type computations,   
is logarithmic in the length of the input. 
For the case where the number of output qubits is not restricted, 
it is shown that the classical simulation is hard 
unless the polynomial-time hierarchy collapses 
(note that for $m=1$, i.e., the DQC1 type, we do not know whether a similar result holds).

\begin{theorem}\label{DQC1_k-log_depth}
If any polynomial-time uniformly generated family 
of  logarithmic-depth quantum circuits, 
when used in the DQC1$_m$-type computations,  
is weakly simulatable with multiplicative error $c\geq 1$ 
(or exponentially small additive error)
for sufficiently large $m$, then $\PH=\AM$.
\end{theorem}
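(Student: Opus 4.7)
The plan is to apply Theorem~\ref{general_collapse}. Given any polynomial-time uniformly generated family~$\{Q_w\}$ of quantum circuits, I will construct a polynomial-time uniformly generated family~$\{D_w\}$ of logarithmic-depth quantum circuits that, when executed in the DQC1$_m$-type computation with $m=\mathrm{poly}(\abs{w})$, has acceptance probability $p_\acc(Q_w)/2^{r(\abs{w})}$ for some polynomial~$r$, and hence is positive exactly when $p_\acc(Q_w)>0$. Theorem~\ref{DQC1_k-log_depth} will then follow directly from Theorem~\ref{general_collapse}.

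Two ingredients drive the construction. First, by the standard measurement-based quantum computation (MBQC) reduction, any polynomial-size circuit~$Q_w$ on $n$~qubits can be simulated by a constant-depth circuit~$R_w$ on $N=\mathrm{poly}(\abs{w})$ qubits initialised to $\ket{+}^{\tensor N}$: $R_w$ consists of a constant number of CZ layers forming a cluster state followed by a single layer of single-qubit basis-change rotations. Measuring every qubit of $R_w\ket{+}^{\tensor N}$ in the computational basis and postselecting on a fixed string $z^\ast\in\{0,1\}^{N-n}$ for the $N-n$ non-output qubits recovers the output distribution of $Q_w\ket{0}^{\tensor n}$ on the remaining $n$~qubits (one picks $R_w$ to implement the unitary $Q_w H^{\tensor n}$ acting on the relevant input register). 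Second, I invoke a logarithmic-depth implementation of the $N$-controlled NOT gate with dirty (uninitialised) ancillas that are returned to their initial values; such a gate can be obtained by parallelising the Barenco et al.~\cite{Bar+95PRA} decomposition into a balanced AND-tree on $O(N)$ ancilla qubits with a symmetric uncomputation pass that tolerates arbitrary ancilla contents.

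The circuit~$D_w$ then acts on one clean qubit, $N$ ``work'' qubits, and $O(N)$ further mixed qubits used as dirty ancillas for Step~1.
\begin{step}
\item Apply the logarithmic-depth generalised Toffoli that flips the clean qubit iff each of the $N$ work qubits is $\ket{0}$, using the dirty ancillas and returning them to their initial state.
\item Apply a Hadamard to each of the $N$ work qubits, followed by the constant-depth circuit~$R_w$.
\item Measure the clean qubit and the $N$ work qubits, and accept iff the clean qubit reads $1$, the $N-n$ non-output qubits of $R_w$ read $z^\ast$, and the designated output qubit of $R_w$ reads $1$.
\end{step}
In the branch of the initial mixture where the $N$ work qubits happen to be $\ket{0}^{\tensor N}$ (probability $2^{-N}$), Step~1 sets the clean qubit to $\ket{1}$ and restores the ancillas, the Hadamards in Step~2 produce $\ket{+}^{\tensor N}$, and $R_w$ executes as designed; the conditional probability of the prescribed acceptance pattern on this branch is $2^{-(N-n)}\,p_\acc(Q_w)$. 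In every other branch the clean qubit remains $\ket{0}$ after Step~1 and is never touched afterwards, so such branches contribute nothing. The total acceptance probability of $D_w$ is therefore $p_\acc(Q_w)/2^{2N-n}$, satisfying the hypothesis of Theorem~\ref{general_collapse} and yielding $\PH=\AM$.

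The main obstacle is Step~1: realising the $N$-controlled NOT in depth $O(\log N)$ with uninitialised ancillas while guaranteeing, simultaneously on every branch of the mixed initial state, both the correct flipping of the clean qubit and the restoration of the ancillas. The standard clean-ancilla AND-tree achieves the depth bound but relies on $\ket{0}$-initialised intermediate ancillas; adapting it to the dirty-ancilla setting without sacrificing logarithmic depth requires a careful symmetric interleaving of forward and reverse AND-tree passes, invoking the Barenco et al.~\cite{Bar+95PRA} dirty-ancilla gadgets at the leaves. Once this construction is in place, the remainder of the argument is a routine combination of MBQC correctness with the branch analysis above.
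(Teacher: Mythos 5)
Your overall architecture (reduce $\NQP$ to a logarithmic-depth DQC1$_m$-type computation and invoke Theorem~\ref{general_collapse}) is sound, and your first ingredient --- simulating $Q_w$ by a constant-depth circuit followed by postselection on a fixed string for the non-output qubits --- is essentially the paper's use of gate teleportation~\cite{FenGreHomZha05FCT}. The argument breaks, however, exactly at the step you yourself flag as ``the main obstacle'': Step~1 needs an $N$-controlled NOT of depth $O(\log N)$ whose ancillas are \emph{dirty}, since every qubit of the DQC1 model other than the single clean target is completely mixed. No such construction is given, and it does not follow from Barenco et al.~\cite{Bar+95PRA}: the balanced AND-tree achieves logarithmic depth only with $\ket{0}$-initialised internal nodes, whereas a Toffoli into a node holding unknown garbage $g$ produces ${g \oplus (\text{conjunction of its children's contents})}$, the children's contents are themselves garbage-contaminated, and a ``symmetric uncomputation pass'' removes the garbage from the ancillas but not from the target, which was flipped conditioned on a contaminated value. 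The toggle-detection trick that repairs a single dirty node requires re-running the entire subcircuit beneath it, so applying it recursively destroys the depth bound; the dirty-ancilla gadgets of Ref.~\cite{Bar+95PRA} (Lemma~7.2, Corollary~7.4) are linear depth, which is why they suffice for Theorem~\ref{Theorem: NQP = NQ[1]P and SBQP = SBQ[1]P} but not here. Until you exhibit a logarithmic-depth generalized Toffoli controlled on a completely mixed register, $D_w$ is not a logarithmic-depth circuit and the theorem does not follow.

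The paper's proof sidesteps this obstacle entirely, and you could adopt the same device. Build the \emph{forward} logarithmic-depth circuit $V_w$ on ${l+1}$ qubits out of the constant-depth teleportation gadget plus a log-depth AND tree over \emph{clean} ancillas --- unproblematic, because $V_w$ is only ever analysed on the input $\ket{0}^{\tensor (l+1)}$, where ${P_{V_w}(o'=0) = p_\acc(Q_w)/2^r}$ --- and then run the \emph{inverse} circuit $V_w^\dagger$ in the DQC1$_{l+1}$ model, measuring all ${l+1}$ qubits and accepting on the all-zeros outcome. By cyclicity of the trace,
\[
\tr\Bigl[ (\ketbra{0})^{\tensor(l+1)} \, V_w^\dagger \Bigl(\ketbra{0}\tensor \tfrac{I^{\tensor l}}{2^l}\Bigr) V_w \Bigr]
= \frac{1}{2^l}\,\tr\Bigl[ \bigl(\ketbra{0}\tensor I^{\tensor l}\bigr) \, V_w (\ketbra{0})^{\tensor(l+1)} V_w^\dagger \Bigr]
= \frac{p_\acc(Q_w)}{2^{l+r}},
\]
so the required $2^{-\mathrm{poly}}$-scaled acceptance probability is obtained without ever conditioning a gate on the mixed register, and $V_w^\dagger$ has the same depth as $V_w$. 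With this replacement for your Steps~1--3 the rest of your argument goes through.
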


\begin{proof}
We only give the proof for the case of multiplicative error 
as its modification to the case of exponentially small additive error is similar to the proof of Theorem~\ref{DQC1_1}. 

Let us assume that a language $L$ is in $\NQP$. This means that
there exists a polynomial-time uniformly generated family $\{Q_w\}$ 
of quantum circuits such that 
the output probability distribution $P_{Q_w}$ satisfies: 
if $w\in L$ then $P_{Q_w}(o=1)>0$, and if $w\notin L$ then $P_{Q_w}(o=1)=0$, 
where $o$ is the single-bit output of the quantum circuit $Q_w$. 
By using the gate teleportation technique~\cite[Lemma 1]{FenGreHomZha05FCT}, 
we can construct from $Q_w$ a constant-depth quantum circuit $Q'_w$
which has $(r+1)$ output qubits $(o,p_1, \ldots ,p_r)$, where $r$ is polynomially 
bounded in $|w|$, 
such that its output probability distribution $P_{Q'_w}$ satisfies
\begin{eqnarray*}
P_{Q'_w}(o=1|p_1=\cdots=p_r=1)=P_{Q_w}(o=1).
\end{eqnarray*}
From $Q'_w$, we further construct a logarithmic-depth quantum circuit
$V_w$ as follows: 
\begin{enumerate} 
\item Prepare the qubits used to simulate $Q'_w$ and an extra qubit (called qubit $o'$) 
to $\ket{0}$.
\item Apply $Q'_w$ on the qubits except $o'$.
\item Prepare more extra qubits to $\ket{0}$ for using Toffoli gates as (binary) AND gates, 
and take the $r$-ary AND of all the qubits $(p_1,\ldots,p_r)$ 
(let $p'$ be one of the extra qubits used for the result of the $r$-ary AND). 
\item Set $o'=0$ if and only if $o=1$ and $p'=1$.
\end{enumerate}
As step 3 is implemented in $O(\log |w|)$ depth, 
it can be seen that the depth of $V_w$ is at most logarithmic in $|w|$. 
Also, it is easy to verify that the output distribution $P_{V_w}$ satisfies 
\begin{eqnarray*}
P_{V_w}(o'=0)=\frac{P_{Q_w}(o=1)}{2^r}.
\end{eqnarray*}

Let us assume that any logarithmic depth quantum circuit, 
when used in the DQC1$_m$-type computations, 
 can be classically efficiently sampled with multiplicative error $c\geq 1$.  
Assume that $V_w$ acts on $l+1$ qubits, where $l$ is polynomially bounded in $|w|$. 
Now consider the DQC1$_{l+1}$-type computation specified by the circuit $V_w^\dagger$. 
Let $P'_w$ be the output probability distribution of the classical simulator 
of this computation, which has $l+1$ output ports $(o_1, \ldots, o_{l+1})$.

Then, if $w\in L$,
\begin{eqnarray*}
P'_w(o_1=\cdots=o_{l+1}=0)
&\ge&
\frac{1}{c}\mbox{Tr}
\Big[(|0\rangle\langle 0|)^{\otimes (l+1)}
\times V_w^\dagger
\Big(|0\rangle\langle0|\otimes\frac{I^{\otimes l}}{2^l}
\Big)
V_w
\Big]\\
&=&
\frac{1}{c2^l}\mbox{Tr}\Big[
\Big(|0\rangle\langle0|\otimes I^{\otimes l}\Big)\times
V_w(|0\rangle\langle 0|)^{\otimes (l+1)}V_w^\dagger
\Big]\\
&=&
\frac{P_{Q_w}(o=1)}{c2^{l+r}}\\
&>&0.
\end{eqnarray*}
On the other hand,
if $w\notin L$,
\begin{eqnarray*}
P'_w(o_1=\cdots=o_{l+1}=0)
&\le&
c\mbox{Tr}\Big[(|0\rangle\langle 0|)^{\otimes (l+1)}\times V_w^\dagger
\Big(|0\rangle\langle0|\otimes\frac{I^{\otimes l}}{2^l}
\Big)V_w\Big]\\
&=&
\frac{c}{2^l}
\mbox{Tr}\Big[
\Big(|0\rangle\langle0|\otimes I^{\otimes l}\Big)\times
V_w(|0\rangle\langle 0|)^{\otimes (l+1)}V_w^\dagger
\Big]\\
&=&
\frac{c P_{Q_w}(o=1)}{2^{l+r}}\\
&=&0.
\end{eqnarray*}
This means that $L$ is in $\NP$. Therefore, $\NQP\subseteq\NP$, 
which leads to $\PH=\AM$.
\end{proof}

On the contrary, if we further restrict the depth to a constant,
then the DQC1$_m$ model is classically simulatable in the strong sense.
This contrasts with the fact that the IQP model is hard to simulate classically 
even if the depth is restricted to a constant~\cite{BreJozShe11RSPA}
unless the polynomial-time hierarchy collapses.

\begin{theorem}\label{DQC1_k-constant_depth}
Any polynomial-time uniformly generated family of
  constant-depth (or even doubly logarithmic depth)
  quantum circuits, when used in the DQC1$_m$-type computations, 
is strongly simulatable for every $m$.
In other words, any marginal distribution of the output of the circuit 
can be calculated with a classical polynomial-time computer.
\end{theorem}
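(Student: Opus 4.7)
The plan is to apply the standard past-light-cone argument to the shallow-depth circuit $Q$, adapted to the mixed-state initial condition of the DQC1$_m$ model, and carried out in the Heisenberg picture so the commutation bookkeeping is transparent. For any subset $S$ of the $m$ output qubits and any outcome $y_S \in \{0,1\}^{\abs{S}}$, I will show that $P_S(y_S)$ depends only on a subcircuit supported on a polylogarithmic-size past light cone, and can therefore be evaluated by explicit matrix simulation.

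First I would construct the past light cone layer by layer. Writing $Q = U_d \cdots U_1$ with each $U_t$ a parallel layer of one- and two-qubit primitive gates, set $L_d = S$ and recursively $L_{t-1} = L_t \cup \{q : q$ shares a gate in $U_t$ with some qubit in $L_t\}$. Because each gate spans at most two qubits, $\abs{L_{t-1}} \leq 2\abs{L_t}$, so $\abs{L_0} \leq \abs{S} \cdot 2^d$. Under the theorem's hypotheses this is $O(\abs{S})$ for constant $d$ and $O(\abs{S} \log n)$ for $d \leq \log_2 \log_2 n$, hence $O(\log n)$ when $\abs{S}$ is appropriately bounded. At each layer, the parallel-gate structure forces a clean tensor decomposition $U_t = V_t \tensor W_t$, where $V_t$ collects the gates of $U_t$ touching $L_t$ (all of whose qubits then lie in $L_{t-1}$ by the definition of $L_{t-1}$), and $W_t$ collects the remaining gates (all of whose qubits lie in $L_{t-1}^c$, since any qubit not in $L_t$ that appears in a gate touching $L_t$ has already been absorbed into $L_{t-1}$).

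Second I would prove the Heisenberg-picture identity $\conjugate{Q}(\ketbra{y_S} \tensor I) Q = \conjugate{V}(\ketbra{y_S} \tensor I) V$, where $V = V_d \cdots V_1$ is the in-light-cone subcircuit. The inductive step is that each $W_t$ is supported on $L_{t-1}^c$ and therefore commutes with any operator supported on $L_t \subseteq L_{t-1}$; conjugation by $\conjugate{U_t} \cdot U_t$ thus collapses to sandwiching by $\conjugate{V_t} \cdot V_t$, enlarging the support from $L_t$ to $L_{t-1}$. Composing over all $d$ layers, the Heisenberg-evolved operator ends up supported entirely on $L_0$. Combined with the product-state structure $\rho_\init = \rho_\init|_{L_0} \tensor \rho_\init|_{L_0^c}$ (with $\rho_\init|_{L_0}$ a tensor product of $\ketbra{0}$ on the clean qubit if it lies in $L_0$ and $\frac{I}{2}$ on each other qubit of $L_0$), this yields
\[
P_S(y_S) = \tr_{L_0}\bigl[(\ketbra{y_S} \tensor I_{L_0 \setminus S}) V \, \rho_\init|_{L_0} \, \conjugate{V}\bigr],
\]
since the trace over $L_0^c$ contributes only the factor $\tr[\rho_\init|_{L_0^c}] = 1$.

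Finally, the right-hand side acts on only $\abs{L_0}$ qubits, so it can be computed by explicit simulation in time $\mathrm{poly}(2^{\abs{L_0}}) = \mathrm{poly}(n)$; iterating over all $y_S \in \{0,1\}^{\abs{S}}$ gives the entire marginal distribution on $S$ in polynomial time whenever $\abs{S} = O(\log n)$, which is the sense in which the theorem asserts that ``any marginal distribution of the output can be calculated.'' The main obstacle I anticipate is the Heisenberg-picture induction at the layer boundaries: one must carefully verify that, as the carried operator's support grows from $L_t$ up to $L_{t-1}$ and beyond, each $W_{t'}$ absorbed at a later stage still has support disjoint from the current support. This reduces to the nested containment $L_d \subseteq L_{d-1} \subseteq \cdots \subseteq L_0$ together with the fact that within each parallel layer the gates act on pairwise disjoint qubits, so the decomposition $U_t = V_t \tensor W_t$ is forced, and the rest of the proof is routine tensor calculus.
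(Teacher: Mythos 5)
There is a genuine gap, and it is in the last step rather than in the light-cone machinery (which is correct as far as it goes). Your construction computes the marginal on a set $S$ of output qubits by brute-force simulation of the $\abs{L_0}$-qubit subcircuit with ${\abs{L_0} \leq \abs{S}\, 2^d}$, which is feasible only when ${2^{\abs{L_0}}}$ is polynomial, i.e.\ ${\abs{S} = O(\log n)}$ for constant depth and in fact only ${\abs{S} = O(1)}$ for depth ${d = \log_2(c \log_2 n)}$ (since ${2^{\abs{S} \cdot c\log_2 n} = n^{c\abs{S}}}$). But the theorem asserts strong simulatability for \emph{every} $m$, and strong simulation requires, in particular, computing the probability ${P(z)}$ of a single full outcome ${z \in \{0,1\}^{l+1}}$ when all ${l+1}$ qubits are measured, with $l$ polynomial. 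That is the marginal with ${S}$ equal to the entire output register, for which your forward light cone is the whole circuit and explicit simulation is exponential. Your closing reinterpretation of ``any marginal distribution can be calculated'' as restricted to ${\abs{S} = O(\log n)}$ is not what the theorem (or the standard definition of strong simulation) says, and the paper's own proof explicitly evaluates ${P(z)}$ for full strings $z$.

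The missing idea is a reversal trick that exploits the one-clean-qubit initial state. Writing ${\ketbra{z} = \bigl(\bigtensor_j X^{z_j}\bigr)(\ketbra{0})^{\tensor(l+1)}\bigl(\bigtensor_j X^{z_j}\bigr)}$ and cycling the trace, the paper obtains
\[
P(z)
=
\frac{1}{2^l}\,
\tr\Bigl[\bigl(\ketbra{0}\tensor I^{\tensor l}\bigr)\,
\conjugate{Q}\Bigl(\bigtensor_{j} X^{z_j}\Bigr)(\ketbra{0})^{\tensor(l+1)}\Bigl(\bigtensor_{j} X^{z_j}\Bigr) Q\Bigr],
\]
which is ${2^{-l}}$ times the probability of outcome $0$ on a \emph{single} output qubit of the constant-depth circuit ${\conjugate{Q}\bigl(\bigtensor_j X^{z_j}\bigr)}$ run on the pure input ${\ket{0}^{\tensor(l+1)}}$. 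A single output qubit has a past light cone of size at most ${2^d}$, so this is computable for constant (and doubly logarithmic) depth; marginals over a subset of the $z_j$ correspond to replacing the associated clean inputs of the reversed circuit by completely mixed states, which does not enlarge the light cone. So the correct proof is your light-cone argument applied to the \emph{inverse} circuit with a single designated output qubit (the clean one), not to the forward circuit with $m$ output qubits; without this reversal the claim for polynomially large $m$ does not follow.
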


\begin{proof}
Let $Q$ be an $(l+1)$-qubit constant-depth quantum circuit, 
when used in the DQC1$_{l+1}$-type computations, 
where $l$ is a polynomial in the input length (here we omit the subscript of the circuit name 
which represents the input). 
For any $z=(z_1,\ldots, z_{l+1})\in\{0,1\}^{l+1}$, 
let $P(z)$ be the probability that the output of $Q$ is $z$. Then, 
\begin{eqnarray*}
P(z)&=&
\mbox{Tr}
\Big[|z\rangle\langle z|\times Q
\Big(|0\rangle\langle0|\otimes \frac{I^{\otimes l}}{2^l} 
\Big)Q^\dagger\Big]\\
&=&\mbox{Tr}
\Big[
\Big(\bigotimes_{j=1}^{l+1}X^{z_j}\Big)
(|0\rangle\langle 0|)^{\otimes (l+1)}
\Big(\bigotimes_{j=1}^{l+1}X^{z_j}\Big)\times
Q\Big(|0\rangle\langle0|\otimes\frac{I^{\otimes l}}{2^l}\Big)Q^\dagger\Big]\\
&=&\frac{1}{2^l}\mbox{Tr}
\Big[
\Big(|0\rangle\langle0|\otimes I^{\otimes l}\Big)\times
Q^\dagger
\Big(\bigotimes_{j=1}^{l+1}X^{z_j}\Big)
(|0\rangle\langle 0|)^{\otimes (l+1)}
\Big(\bigotimes_{j=1}^{l+1}X^{z_j}\Big)
Q
\Big],
\end{eqnarray*}
which can be exactly calculated for any $z$, 
since the numerator is the single-qubit output probability 
of the constant-depth circuit $Q^\dagger\left(\bigotimes_{j=1}^{l+1}X^{z_j}\right)$ 
with the input $|0\rangle^{\otimes (l+1)}$.
(Note that the single-qubit output probability distribution 
of any constant-depth quantum circuit can be exactly calculated with
a polynomial-time classical computer~\cite{FenGreHomZha05FCT}, 
since the single output qubit is entangled only a constant number of input qubits, 
and the quantum computing of a constant number of qubits can be easily simulated 
with a classical computer.)

Any marginal of the probability distribution $\{P(z)\}$ can also be exactly calculated, since
in this case we have only to consider a constant-depth quantum circuit
where some of clean input qubits are replaced with completely-mixed states.

Finally, it is easily seen that the above proof works even if the depth of $Q$ 
is at most $d=\log_2 (c\cdot\log_2 n)$ for the input length $n$ and 
a positive constant $c$, since in this case the single output qubit is 
affected by at most $2^d=c\cdot \log_2 n$ input qubits, and thus 
we only need to consider a subspace of $2^{2^d}=n^c$ dimension.
\end{proof}

\subsection{IQP DQC1$_m$ model}

Finally, we consider the ``intersection'' of the IQP model and the DQC1 model.
An IQP DQC1$_m$ circuit is defined in the following way: 
For any polynomially-bounded $l$ and any $m\le l+1$,
\begin{enumerate}
\item the input state is $|0\rangle\langle 0| \otimes (\frac{I}{2})^{\otimes l}$;
\item apply $H^{\otimes (l+1)}$ to all the qubits;
\item apply a polynomial number of the controlled-$Z$ gates and the $e^{i \theta Z}$ gates\footnote{Here, $\theta$ is a multiple of $\pi/8$ due to our choice of the gate sets. 
However, Theorem~\ref{IQP-DQC1} holds when $\theta$ is any polynomial-time 
computable real, if arbitrarily small approximation error is allowed for the strong simulation. 
Moreover, it still holds even if the controlled-$Z$ gates are extended to the $e^{i \theta (Z_{j_1}\otimes Z_{j_2})}$ gates (where $Z_j$ denotes the $Z$ gate acting on qubit $j$).};
\item apply $H^{\otimes (l+1)}$ to all the qubits;
\item measure $m$ output qubits in the computational basis.
\end{enumerate} 

Ref.~\cite{BreJozShe11RSPA} shows that 
IQP circuits with polynomially many output qubits are not even weakly simulatable
unless the polynomial hierarchy collapses.
Here, we can show that any IQP DQC1$_m$ circuit is strongly simulatable for any~$m$. 

\begin{theorem}\label{IQP-DQC1}
Any polynomial-time uniformly generated family of  IQP DQC1$_m$ circuits 
is strongly simulatable for every $m$.
\end{theorem}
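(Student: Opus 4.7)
My plan is to show that for any IQP DQC1$_m$ circuit the full $(l+1)$-qubit output distribution factorizes as a product of (i)~the marginal distribution on the clean qubit (qubit~$0$) and (ii)~the uniform distribution on the remaining $l$ qubits, and then to give a closed-form expression for the marginal on qubit~$0$ that is classically computable in polynomial time. Given such a factorization, every $m$-qubit marginal is read off immediately: either it is uniform on $\{0,1\}^m$ (when qubit~$0$ is not among the measured qubits), or it equals the one-bit marginal on qubit~$0$ times the uniform distribution on $\{0,1\}^{m-1}$.

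Let $D$ denote the diagonal unitary implemented in step~$3$, so that $D\ket{x}=e^{i\phi(x)}\ket{x}$ for a real function~$\phi$ on $\{0,1\}^{l+1}$. I would first observe that the state entering step~$3$ is $\ketbra{+}\tensor(I/2)^{\tensor l}$, since $H\ketbra{0}H=\ketbra{+}$ and $H(I/2)H=I/2$, and that applying $D$ to this produces the classical-quantum state
\[
\rho_3=\frac{1}{2^l}\sum_{y\in\{0,1\}^l}\ketbra{\psi_y}\tensor\ketbra{y},
\qquad
\ket{\psi_y}=\tfrac{1}{\sqrt{2}}\bigl(\ket{0}+e^{i\alpha(y)}\ket{1}\bigr),
\]
with $\alpha(y)=\phi(1,y)-\phi(0,y)$. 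Applying $H^{\tensor(l+1)}$ in step~$4$ and then computing, as a trace, the probability $P(w_0,w')$ of observing $(w_0,w')$ when all $l+1$ qubits are measured should yield, after simplification,
\[
P(w_0,w')=\frac{P(w_0)}{2^l},
\qquad
P(w_0)=\tfrac{1}{2}+\tfrac{(-1)^{w_0}}{2^{l+1}}\sum_{y\in\{0,1\}^l}\cos\alpha(y),
\]
which is precisely the desired factorization together with an explicit formula for the marginal on qubit~$0$.

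To finish I would exploit the IQP structure to evaluate $P(w_0)$ efficiently. Because every gate in step~$3$ contributes a phase that is at most bilinear in the bits of $x$, the function $\phi$ is a real polynomial of degree at most~$2$ in those bits whose coefficients are polynomial-time computable combinations of the gate parameters and $\pi$. Its discrete $x_0$-derivative $\alpha$ is therefore affine in $y$, $\alpha(y)=A+B\cdot y$ with explicit $A\in\bbR$ and $B\in\bbR^l$, so
\[
\sum_{y\in\{0,1\}^l}\cos\alpha(y)
=
\mathrm{Re}\,e^{iA}\prod_{j=1}^{l}\bigl(1+e^{iB_j}\bigr)
\]
is an $l$-fold scalar product computable in polynomial time. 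Combining this with the factorization above produces any requested $m$-qubit marginal in polynomial time. The one step needing genuine care is the derivation of $P(w_0,w')=P(w_0)/2^l$: merely observing that each mixed qubit stays marginally uniform is not by itself enough, and one must track the full classical-quantum form of $\rho_3$ through the last Hadamard layer to obtain actual independence between qubit~$0$ and the rest. Once that is established, the extensions mentioned in the footnote (arbitrary computable real $\theta$, or the two-qubit $e^{i\theta(Z_{j_1}\tensor Z_{j_2})}$ gates) cost nothing, since $\phi$ stays quadratic in $x$ and $\alpha$ stays affine in $y$.
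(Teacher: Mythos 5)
Your proposal is correct, but it reaches the conclusion by a route that differs in both of its main steps from the paper's. The paper first uses the algebraic fact that an IQP circuit $Q=H^{\tensor(l+1)}DH^{\tensor(l+1)}$ commutes with every tensor product of $X$'s: expanding $(I/2)^{\tensor l}$ as $\frac{1}{2^l}\sum_s X^s\ketbra{0}^{\tensor l}X^s$, pushing the $X$'s through $Q$ onto the measurement projector, and summing over $s$ collapses the DQC1 input to the pure input $\ket{0}^{\tensor(l+1)}$ and yields ${P(z)=\frac{1}{2^l}\,p(z_1)}$, where $p(z_1)$ is the single-qubit marginal of the pure-state IQP circuit; it then evaluates $p(z_1)$ as a one-qubit measurement on a graph state, which is either trivially computable (isolated vertex) or exactly $1/2$ (non-isolated vertex, by orthogonality of the two branches). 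You instead track the classical-quantum state $\frac{1}{2^l}\sum_y\ketbra{\psi_y}\tensor\ketbra{y}$ through the final Hadamard layer and obtain the same factorization ${P(w_0,w')=P(w_0)/2^l}$ from the observation that $\abs{\braket{w'}{H^{\tensor l}\,y}}^2=2^{-l}$ independently of $y$ and $w'$ (the absence of cross terms in $y$, inherited from the maximally mixed input, is exactly the ``genuine care'' you flag, and your argument does supply it); you then evaluate $P(w_0)$ in closed form via the affine phase difference ${\alpha(y)=A+B\cdot y}$ and the product formula $\sum_y\cos\alpha(y)=\mathrm{Re}\,e^{iA}\prod_j(1+e^{iB_j})$. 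The paper's graph-state dichotomy is recovered as the special case ${B_j\in\{0,\pi\}}$ of your formula. What each approach buys: the paper's $X$-conjugation trick is reusable (it is essentially the same manipulation as in its constant-depth theorem) and reduces to a known fact about graph states, whereas your explicit formula makes the footnote's generalizations (arbitrary computable $\theta$, and $e^{i\theta(Z_{j_1}\tensor Z_{j_2})}$ gates) completely transparent, since $\alpha$ remains affine and the product formula still applies.
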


\begin{proof}
Let $Q$ be an $(l+1)$-qubit IQP DQC1$_{l+1}$ circuit. 
Since the controlled-$Z$ gate and the $e^{i\theta Z}$ gate 
commute with each other, we can write $Q$ as
\begin{eqnarray*}
Q=H^{\otimes (l+1)}\Big(\bigotimes_{j=1}^{l+1}e^{i\theta_j Z}\Big)
\Big(\prod_{(i,j)\in E}CZ_{i,j}
\Big)H^{\otimes (l+1)}
\end{eqnarray*}
without loss of generality,
where $E$ is the set of edges of a certain $(l+1)$-vertex graph $G$,  
and $CZ_{i,j}$ denotes the operator on the $(l+1)$ qubits that corresponds 
to the controlled-$Z$ gate applied to the two qubits 
on vertices $i$ and $j$. 
Furthermore, it is easy to verify that $Q$ commutes with 
$\bigotimes_{j=1}^{l+1}X^{x_j}$ for any $(x_1,\ldots, x_{l+1})\in\{0,1\}^{l+1}$.

For any $z=(z_1,\ldots, z_{l+1})\in\{0,1\}^{l+1}$, let $P(z)$ be the probability 
that the output of $Q$ is $z$. Then, 
\begin{eqnarray*}
P(z)
&=&\mbox{Tr}
\Big[|z\rangle\langle z|\times
Q\Big(|0\rangle\langle0|\otimes\frac{I^{\otimes l}}{2^l}\Big)Q^\dagger\Big]\\
&=&
\frac{1}{2^l}\sum_{
s\in\{0,1\}^l}\mbox{Tr}
\Big[|z\rangle\langle z|\times 
Q
\Big(I\otimes\bigotimes_{j=1}^l X^{s_j}\Big)
\Big(|0\rangle\langle0|\otimes(|0\rangle\langle0|)^{\otimes l}\Big)
\Big(I\otimes\bigotimes_{j=1}^l X^{s_j}\Big)
Q^\dagger\Big]\\
&=&
\frac{1}{2^l}\sum_{
s\in\{0,1\}^l}\mbox{Tr}
\Big[
\Big(I\otimes\bigotimes_{j=1}^l X^{s_j}\Big)
|z\rangle\langle z|
\Big(I\otimes\bigotimes_{j=1}^l X^{s_j}\Big)
\times
Q
\Big(
(|0\rangle\langle0|)^{\otimes (l+1)}\Big)
Q^\dagger\Big]\\
&=&
\frac{1}{2^l}\sum_{
s\in\{0,1\}^l}\mbox{Tr}
\Big[
\Big(|z_1\rangle\langle z_1|\otimes\bigotimes_{j=1}^l
|z_{j+1}\oplus s_j\rangle\langle z_{j+1}\oplus s_j|\Big)
\times
Q
\Big(
(|0\rangle\langle0|)^{\otimes (l+1)}\Big)
Q^\dagger\Big]\\
&=&
\frac{1}{2^l}\mbox{Tr}
\Big[
(|z_1\rangle\langle z_1|\otimes I^{\otimes l})\times
Q
\Big(
(|0\rangle\langle0|)^{\otimes (l+1)}\Big)
Q^\dagger\Big]\\
&=&
\frac{1}{2^l}\mbox{Tr}
\Big[
(|\phi_{z_1}\rangle\langle \phi_{z_1}|\otimes I^{\otimes l})\times
|G\rangle\langle G|
\Big]
\end{eqnarray*}
where $s_j$ is the $j$th bit of $s$, 
\begin{eqnarray*}
|G\rangle=\Big(\prod_{(i,j)\in E}CZ_{i,j}\Big)|+\rangle^{\otimes (l+1)}
\end{eqnarray*}
is a graph state (recall that $|+\rangle=\frac{1}{\sqrt{2}}(\ket{0}+\ket{1})$),
and
\begin{eqnarray*}
|\phi_{z_1}\rangle=e^{-i\theta_1 Z}H|z_1\rangle.
\end{eqnarray*}

Note that $p(z_1)\equiv \mbox{Tr}
\Big[
(|\phi_{z_1}\rangle\langle \phi_{z_1}|\otimes I^{\otimes m})\times
|G\rangle\langle G|
\Big]$ determines 
the output probability distribution of
the single-qubit measurement on the graph state. 
This can be calculated in classical polynomial time:  
If the vertex of $G$ corresponding to the first qubit
is isolated, $p(z_1)$ can be easily calculated
since a single qubit quantum computing can be trivially simulated 
with a classical computer.
If the vertex is not isolated, $p(z_1)=\frac{1}{2}$,
since
\begin{eqnarray*}
|G\rangle=\frac{1}{\sqrt{2}}\Big[
|0\rangle\otimes |G'\rangle
+|1\rangle\otimes \Big(\bigotimes_{(1,j)\in E}Z_j\Big)|G'\rangle
\Big],
\end{eqnarray*}
where $Z_j$ is the $Z$ gate applied to the qubit on vertex $j$, 
$G'$ is the $m$-vertex graph created from $G$ by removing the first vertex
and edges connected with the first vertex,
and $|G'\rangle$ and 
$\Big(\bigotimes_{(1,j)\in E}Z_j\Big)|G'\rangle$ are orthogonal with each other.
Thus, $P(z)$ is calculated in classical polynomial time.
\end{proof}

\subsection*{Acknowledgements}
Keisuke Fujii is supported by JSPS Grant-in-Aid for Research Activity Start-up 25887034. 
Hirotada Kobayashi and Harumichi Nishimura are supported by KAKENHI 24240001. 
Tomoyuki Morimae is supported by the Tenure Track System by MEXT Japan, and KAKENHI 26730003.
Harumichi Nishimura is also supported by KAKENHI 24106009 and 25330012.
Hirotada Kobayashi and Seiichiro Tani are also grateful to KAKENHI 24106009.


\providecommand{\noopsort}[1]{}

\end{document}